\documentclass[11pt,a4paper]{article}
\usepackage[T1]{fontenc}
\usepackage[utf8]{inputenc}
\usepackage[english]{babel}
\usepackage{lmodern,amsmath,amssymb,amsthm,mathtools,bm,braket}
\usepackage[margin=27mm]{geometry}
\usepackage{microtype,booktabs,enumitem}
\usepackage[colorlinks=true,linkcolor=blue,citecolor=blue,urlcolor=blue]{hyperref}
\usepackage{array}
\usepackage{comment}
\usepackage{authblk}

\usepackage{orcidlink}

\newcommand{\M}{\mathsf M_d}
\newcommand{\id}{\operatorname{id}}
\newcommand{\im}{\operatorname{Im}}
\newcommand{\tr}{\operatorname{Tr}}
\newcommand{\norm}[1]{\left\lVert#1\right\rVert}
\newcommand{\Ad}{\operatorname{Ad}}
\newtheorem{theorem}{Theorem}[section]
\newtheorem{proposition}[theorem]{Proposition}
\newtheorem{lemma}[theorem]{Lemma}
\newtheorem{corollary}[theorem]{Corollary}
\newtheorem{remark}[theorem]{Remark}
\newtheorem{example}[theorem]{Example}

\usepackage{ulem, xcolor}
\newcommand{\fs}[1]{\textcolor{blue}{#1}}
\newcommand{\dc}[1]{\textcolor{red}{#1}}

\begin{document}

\title{Divisibility of Non-Invertible Quantum Dynamics\\
in the Heisenberg Picture}

\author{Federico Settimo\, \orcidlink{0000-0002-0123-6950}$^1$ and Dariusz Chru{\'s}ci{\'n}ski\, \orcidlink{0000-0002-6582-6730}$^2$}
%\footnote{Corresponding author: darch@fizyka.umk.pl} 
\affil{$^1$ Department of Physics and Astronomy, University of Turku, \\ FI-20014 Turun Yliopisto, Finland \\
$^2$Institute of Physics, Faculty of Physics, Astronomy and Informatics, \\Nicolaus Copernicus University, Grudzi\c{a}dzka 5/7, 87--100 Toru{\'n}, Poland\\
}
\date{}
\maketitle

\begin{abstract}

We study divisibility of finite-dimensional quantum dynamics in the
Heisenberg picture when the dynamical maps need not be invertible.  Heisenberg divisible 
dynamics determines unique reduced propagators on the accessible operator
systems, and Heisenberg CP-divisibility is shown to be equivalent to an 
operator-norm monotonicity.
Although each reduced propagator admits a unital completely positive extension, independently chosen extensions need not  satisfy a composition law.
Generalized inverses restore
this law algebraically, but do not, in general, preserve complete positivity.
We characterize the resulting obstruction and give a sufficient construction
based on unital completely positive projections onto the accessible operator
systems. It is shown that for qubit dynamics Heisenberg
CP-divisibility always guarantees a completely positive
generalized-inverse propagator. Finally, we
analyze time-local generators at singular times and identify the associated
gauge freedom on dynamically inaccessible directions.
 
\end{abstract}

\section{Introduction}

No quantum system is perfectly isolated from its surroundings. Interaction with an external environment, a measuring apparatus, or uncontrolled degrees of freedom generally transforms the unitary evolution of the total system into a nonunitary evolution of the subsystem of interest \cite{Breuer-Petruccione, Rivas-Huelga-OQS, Vacchini-OQS}. In the Schrödinger picture this reduced dynamics is described by a family of completely positive and trace-preserving (CPTP) maps $\{\Lambda_t\}_{t\geq0}$ acting on states. Equivalently, in the Heisenberg picture one considers the dual family $ \Lambda^*_t$, consisting of unital completely positive (UCP) maps acting on observables. The two pictures give the same expectation values,

$$
\operatorname{Tr}\bigl[\Lambda_t(\rho)X\bigr]
 =
\operatorname{Tr}\bigl[\rho\,\Lambda^*_t(X)\bigr],
$$
but  the intermediate dynamics is different in the two pictures.

An important problem in the theory of open quantum systems is to determine whether the evolution between two times can itself be represented by a physical map. This leads to the notion of divisibility. The standard Schr\"odinger divisibility requires the existence of intermediate maps $V_{t,s}$ such that $\Lambda_t = V_{t,s} \Lambda_s$ for any $t > s$. Similarly, Heisenberg divisibility requires $\Lambda^*_t = W_{t,s}^* \Lambda^*_s$ \cite{Settimo-SchroHeis}. If for all $t>0$ each map $\Lambda_t$ (equivalently $\Lambda^*_t$) is invertible, then both Schr\"odinger and Heisenberg divisibility are guaranteed; indeed $V_{t,s} = \Lambda_t \Lambda_s^{-1}$ and $W_{t,s}^* = \Lambda^*_t {\Lambda^*_s}^{-1}$. We call the evolution Schr\"odinger CP-divisible whenever each map $V_{t,s}$ is 
CPTP, and Heiseberg CP-divisible iff $W_{t,s}^*$ is UCP. Schr\"odinger CP-divisibility has become one of the standard notions of quantum Markovianity: it expresses the idea that the evolution can be continued from any intermediate time without requiring information that has already been lost to the environment \cite{rivas_quantum_2014, DeVega2017, BLPV, li_concepts_2018, Chruscinski2022}. Interestingly, in a recent paper \cite{Settimo-SchroHeis}, it was shown that in general Schr\"odinger and Heisenberg CP-divisibility are independent properties of a quantum evolution. 

In this paper we analyze CP-divisibility for a general quantum evolution which is represented by dynamical maps which might be non-invertible. Non-invertible dynamics maps were already used to represent evolution of quantum-optical models with structured reservoirs or single-photon wave packets \cite{andersson_finding_2007, cresser_master_2010, hou_singularity_2012, hegde_open_2021, fang_non-markovian_2018, dabrowska_eternally_2021}. Mathematical properties of particular models were analyzed for example in \cite{siudzinska_non-markovianity_2022, jagadish_measure_2022, jagadish_noninvertibility_2023} (cf. also recent works \cite{davidovic_nonperturbative_2026, cockett_generalized_2026, blumenfeld_modifying_2025}). Authors of \cite{CRS} proved that for a $d$-level {system} if 

\begin{equation}   \label{CRS-d}
  \bigl\|(\operatorname{id}_d\otimes\Lambda_t)(X)\bigr\|_1
\leq
\bigl\|(\operatorname{id}_d\otimes\Lambda_s)(X)\bigr\|_1,
\end{equation}
for all $X=X^* \in \M \otimes \M$, then there exist a family  of {\it{reduced}} CPTP propagators $v_{t,s} : \im \Lambda_s \to \im \Lambda_t$. They can be extended to the full matrix algebra $V_{t,s} :\M \to \M$ as CP maps but in general trace-preservation is not guaranteed. Interestingly, it turned out \cite{chakraborty_information_2019} that for qubits ($d=2$) Schr\"odinger CP-divisibility is equivalent to (\ref{CRS-d}). In what follows we analyze a similar problem but for Heisenberg CP-divisibility. Our central result characterizes this property directly in terms of the loss of distinguishability of observables. Heisenberg CP-divisibility is equivalent to

\begin{equation}   \label{!}
   \bigl\|(\operatorname{id}_d\otimes\Lambda^*_t)(X)\bigr\|_\infty
\leq
\bigl\|(\operatorname{id}_d\otimes\Lambda^*_s)(X)\bigr\|_\infty
\end{equation}
for all $X=X^* \in \M \otimes \M$. Condition (\ref{!}) implies the existence of a family  of {\it reduced} UCP propagators $w_{t,s}^* : \im \Lambda^*_s \to \im \Lambda^*_t$. Now, contrary to the Schr\"odinger case, due to the Arveson extension theorem \cite{Arveson}, they can be extended to the full matrix algebra $W_{t,s}^* :\M \to \M$ as UCP maps.

The reduced propagators $w_{t,s}^*$ are uniquely defined by the dynamical map $\Lambda^*_t$  
%intrinsic \fs{\it (not sure I understand what you mean by intrinsic)} 
and satisfy the expected composition law $w_{t,u}^* w_{u,s}^* = w_{t,s}^*$ on the accessible operator system $\im \Lambda^*_s$. Their extensions $W_{t,s}^*$ to the full algebra, however, are highly nonunique and independently chosen extensions need not compose. This leads to a second problem studied in the paper: how to construct a consistent family of full propagators once invertibility is lost. Generalized inverses provide an explicit algebraic answer. Given a generalized inverse ${\Lambda^*_s}^{-}$, the maps

$$
{W_{t,s}^G}^*=\Lambda^*_t{\Lambda^*_s}^{-}
$$
are unital linear propagators and automatically satisfy the composition rule. Complete positivity, however, is not automatic. Such propagators are constrained to take values in
$\operatorname{Im}\Lambda^*_t$, whereas a general completely positive extension of the reduced propagator may have to enlarge this range. We show that a useful sufficient condition is the existence of a UCP projection onto each accessible operator system. Conditional expectations, Tomiyama's theorem, and the Choi--Effros product clarify the meaning of this complementability condition. This additional requirement is genuinely stronger than CP-divisibility. We construct a continuous Heisenberg CP-divisible evolution on $\mathsf M_3$, starting from the identity, for which no choice of generalized inverse produces a completely positive propagator for certain pairs of times. Hence replacing an inverse by a generalized inverse does not, by itself, solve the completely positive propagation problem. A similar problem was analyzed for Schr\"odinegr CP-divisibility in \cite{CC}. 
The situation is more rigid for a qubit. We prove that{, for} a norm-continuous Heisenberg CP-divisible qubit evolution starting from the identity, UCP generalized-inverse propagators can always be constructed.  The qubit is therefore special both for Schr\"odinger and Heisenberg CP-divisibility property.

We supplement the general results with explicit examples. They demonstrate that Schrödinger and Heisenberg divisibility may differ even at the level of existence of linear propagators, and that a UCP extension may be forced to leave the range of the final dynamical map. For singular phase-covariant qubit dynamics we derive exact finite-time and differential divisibility conditions in both pictures. We also discuss {existence and explicit construction of }time-local generators.

The paper is organized as follows.  Section~\ref{sec:schrodinger-heisenberg} reviews Schr\"odinger and
Heisenberg divisibility for invertible dynamics.  Section~\ref{sec:divisibility_non_inv} treats the
noninvertible case: it establishes the kernel and image criteria, proves the
operator-norm characterization of Heisenberg CP-divisibility, analyzes
generalized-inverse propagators, and isolates the special qubit mechanism.
Section~\ref{sec:tl_gen} discusses time-local generators and their nonuniqueness at singular
times.  Section~\ref{sec:conclusions} summarizes the conclusions and open questions.  The appendix
contains the continuous qutrit counterexample to complete positivity of the
generalized-inverse construction.

\section{Schr\"odinger versus Heisenberg divisibility}
\label{sec:schrodinger-heisenberg}

The Schr\"odinger and Heisenberg pictures give equivalent physical
predictions, but they do not attach the same divisibility property to a
generic time-dependent dynamical map.  The reason is elementary but
important: taking the Hilbert--Schmidt adjoint reverses the order of
composition.  Following Ref.~\cite{Settimo-SchroHeis}, we first make this point in
the invertible setting, where all intermediate maps are unique, and then
identify what changes when invertibility is lost.

Let $\Lambda_t:\M\to\M$ be a family of  {CPTP} maps, with $\Lambda_0=\id$, and let
$\Lambda^*_t$ denote the dual Heisenberg evolution.  Thus
$\Lambda^*_t$ is {UCP}, and
\begin{equation}
 \tr[\Lambda_t(\rho)X]=\tr[\rho\,\Lambda^*_t(X)] .
% \qquad (\rho,X\in\M).
 \label{eq:Sch-Heis-duality}
\end{equation}
Throughout the paper invertibility means
invertibility as a linear map on $\M$.  Neither $\Lambda_t^{-1}$ nor
${\Lambda^*_t}^{-1}$ is assumed to be positive.

If $\{\Lambda_t\}_{t \geq 0}$ is invertible for all $t \geq 0$, there are two natural two-parameter maps,
\begin{equation}
 V_{t,s}:=\Lambda_t \Lambda_s^{-1},
 \qquad
 W_{t,s}:=\Lambda_s^{-1} \Lambda_t,
 \qquad t\geq s,
 \label{eq:left-right-propagators}
\end{equation}
and hence two factorizations,
\begin{equation}
 \Lambda_t
 =V_{t,s} \Lambda_s
 =\Lambda_s  W_{t,s}.
 \label{eq:left-right-factorizations}
\end{equation}
Both propagators -- left $V_{t,s}$ and right $W_{t,s}$ -- are trace preserving and Hermiticity preserving, but
they need not be positive.  The usual Schr\"odinger definition uses the
left factorization: the dynamics is Schr\"odinger CP-divisible when
$V_{t,s}$ is CPTP for every $t\geq s$.  This choice has the
direct state-evolution interpretation $\rho_t =V_{t,s}(\rho_s$.

Since $(A  B)^*=B^*  A^*$, one has in the Heisenberg picture
\begin{equation}
    \Lambda^*_t
    =W^*_{t,s} \Lambda^*_s
    =\Lambda^*_s  V^*_{t,s},
    \label{eq:Heis-left-right-factorizations}
\end{equation}
so that the adjoint of the left Schr\"odinger propagator is a right propagator in the Heisenberg picture, and viceversa.
Since it holds that $X_t = W^*_{t,s}(X_s)$ for every evolved observable $X_s=\Lambda^*_s(X)$, Heisenberg CP-divisibility means that the left propagator $W^*_{t,s}$ is UCP for all $t\geq s$.
%\sout{The corresponding Heisenberg propagators are
%\begin{equation}
 %W^*_{t,s}:=\Lambda^*_t {\Lambda^*_s}^{-1},
 %\qquad
 %V^*_{t,s}:={\Lambda^*_s}^{-1} \Lambda^*_t,
 %\label{eq:Heis-left-right-propagators}
%\end{equation}
%so that
%\begin{equation}
 %\Lambda^*_t
 %=W^*_{t,s} \Lambda^*_s
 %=\Lambda^*_s  V^*_{t,s}.
 %\label{eq:Heis-left-right-factorizations}
%\end{equation}
%Heisenberg CP-divisibility means that the left propagator
%$W^*_{t,s}$ is UCP for all $t\geq s$; equivalently,
%$X(t)=W^*_{t,s}(X(s))$ for every evolved observable
%$X(s)=\Lambda^*_s(X)$.  Since $(A  B)^*=B^*  A^*$, one has
%\begin{equation}
 %W^*_{t,s}=(w_{t,s}^*)^*,
 %\qquad
 %V^*_{t,s}=(V_{t,s})^*.
 %\label{eq:duality-reverses-propagators}
%\end{equation}
Consequently,
\begin{align}
 &\text{Schr\"odinger CP-divisibility}
 &&\Longleftrightarrow \ \
 V_{t,s}\text{ is CPTP}
 &&\Longleftrightarrow \ \
 V^*_{t,s}\text{ is UCP},
 \label{eq:Sch-left-Heis-right}\\
 &\text{Heisenberg CP-divisibility}
 &&\Longleftrightarrow \ \ 
 W^*_{t,s}\text{ is UCP}
 &&\Longleftrightarrow \ \ 
 W_{t,s}^*\text{ is CPTP}.
 \label{eq:Heis-left-Sch-right}
\end{align}
Thus duality identifies a left factorization in one picture with a right
factorization in the other.  It does {not} identify the two left
divisibility conditions.  Although Eq.~\eqref{eq:Sch-Heis-duality}
describe the same expectation values, Schr\"odinger and Heisenberg
divisibility test, respectively, the complete positivity of
$\Lambda_t\Lambda_s^{-1}$ and of $\Lambda_s^{-1}\Lambda_t$.  These maps need not to coincide.
The two families obey natural composition laws as
\begin{equation}
    V_{t,u}  V_{u,s}
 =V_{t,s},
 \qquad
 W^*_{t,u}  W^*_{u,s}
 =W^*_{t,s}.
 \label{eq:left-right-composition}
\end{equation}
 For an invertible family $\{\Lambda_t\}_{t\ge 0}$, Schr\"odinger P-divisibility is
equivalent to \cite{chruscinski_measures_2011,CRS}

\begin{equation}
 \norm{\Lambda_t(X)}_1\leq\norm{\Lambda_s(X)}_1
 \qquad(t\geq s,\;X=X^*),
 \label{eq:Sch-trace-norm-monotonicity}
\end{equation}
whereas Heisenberg P-divisibility is equivalent to
\begin{equation}
 \norm{\Lambda^*_t(Y)}_\infty\leq\norm{\Lambda^*_s(Y)}_\infty
 \qquad(t\geq s,\;Y=Y^*).
 \label{eq:Heis-operator-norm-monotonicity}
\end{equation}
The trace norm quantifies the distinguishability of states (more
generally, of Hermitian Helstrom matrices), while the operator norm
quantifies the distinguishability of effects.  Thus a revival of the
trace-norm quantity witnesses failure of Schr\"odinger P-divisibility,
whereas a revival of the operator-norm quantity witnesses failure of
Heisenberg P-divisibility~\cite{Settimo-SchroHeis}.

Complete positivity is obtained by imposing the corresponding
inequality at matrix level {on extensions of the Hilbert space}.  In dimension $d$, it is enough to use a
$d$-dimensional ancilla:
\begin{align}
 \norm{(\id_d\otimes\Lambda_t)(X)}_1
 &\leq
 \norm{(\id_d\otimes\Lambda_s)(X)}_1,
 & X&=X^*,
 \label{eq:Sch-complete-trace-norm}\\
 \norm{(\id_d\otimes\Lambda^*_t)(Y)}_\infty
 &\leq
 \norm{(\id_d\otimes\Lambda^*_s)(Y)}_\infty,
 & Y&=Y^*.
 \label{eq:Heis-complete-operator-norm}
\end{align}
For invertible dynamics, the first condition characterizes
Schr\"odinger CP-divisibility and the second characterizes Heisenberg
CP-divisibility.  These finite-time inequalities are  robust
formulations.  Differential versions require additional regularity:
under local absolute continuity the relevant norm functions are
absolutely continuous and their derivatives are nonpositive almost
everywhere.

%whereas under mere continuity a pointwise formulation must
%use an upper Dini derivative.

%%%%%%%%%%%%%%%%%%%%%%%%%%%%%%%%%%%%%%%%%%%%%%%%%%%%%%%%%%%%%%%%%
\section{Divisibility of non-invertible dynamics}
\label{sec:divisibility_non_inv}

Suppose now that the family $\{\Lambda_t\}_{t \ge 0}$ allows for non-invertible maps for some times. 
At such {\it singular times}, the formulas containing $\Lambda_s^{-1}$ or
${\Lambda^*_s}^{-1}$ cease to be available, and an intermediate map is no
longer unique. Schr\"odinger divisibility and Heisenberg divisibility mean, respectively,
\begin{align}
 \Lambda_t =V_{t,s}\Lambda_s, \quad  \Lambda^*_t = W_{t,s}^*\Lambda^*_s, 
\end{align}
where $V_{t,s}, W_{t,s}^* : \M\to\M$. Moreover, evolution is  Schr\"odinger CP-divisible  (Heisenberg CP-divisible) if $V_{t,s}$ is CPTP ($W_{t,s}$ is UCP)  for $t \geq s$. P-divisibility requires positivity instead of complete positivity. Note, that if the map is not invertible at {time $s$} the existence of propagators $V_{t,s}$ and $W_{t,s}^*$ is not guaranteed.

\begin{proposition}[{Schr\"odinger divisibility, }\cite{CRS}]
A family $\{\Lambda_t\}_{t\ge 0}$ of CPTP maps is Schr\"odinger divisible if and only if
\begin{equation}\label{eq:kernel-S}
 \ker\Lambda_s\subseteq\ker\Lambda_t , \qquad(t\geq s).
\end{equation}
\end{proposition}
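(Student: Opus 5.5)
The plan is to treat this as a statement of linear algebra. "Schr\"odinger divisible" here means only the existence of some linear map $V_{t,s}:\M\to\M$ with $\Lambda_t=V_{t,s}\Lambda_s$; no positivity is demanded. I would prove the two implications separately.

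\emph{Necessity.} Suppose $\Lambda_t=V_{t,s}\Lambda_s$ and take $X\in\ker\Lambda_s$. Then $\Lambda_t(X)=V_{t,s}(\Lambda_s(X))=V_{t,s}(0)=0$, so $X\in\ker\Lambda_t$. This gives \eqref{eq:kernel-S}.

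\emph{Sufficiency.} Assume $\ker\Lambda_s\subseteq\ker\Lambda_t$.
\begin{enumerate}[label=(\roman*)]
\item Define $v_{t,s}:\im\Lambda_s\to\im\Lambda_t$ by $v_{t,s}(\Lambda_s(X)):=\Lambda_t(X)$. It is well defined: if $\Lambda_s(X)=\Lambda_s(X')$, then $X-X'\in\ker\Lambda_s\subseteq\ker\Lambda_t$, so $\Lambda_t(X)=\Lambda_t(X')$. It is clearly linear.
\item Choose any complementary subspace $K$ with $\M=\im\Lambda_s\oplus K$. For definiteness take the Hilbert--Schmidt orthogonal complement, or equivalently fix a generalized inverse $\Lambda_s^{-}$.
\item Set $V_{t,s}:=v_{t,s}$ on $\im\Lambda_s$ and $V_{t,s}:=0$ on $K$. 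This is the same as writing $V_{t,s}=\Lambda_t\Lambda_s^{-}$, using that $\Lambda_t\Lambda_s^{-}\Lambda_s=\Lambda_t$. The last identity holds because $\Lambda_s^{-}\Lambda_s-\id$ maps into $\ker\Lambda_s\subseteq\ker\Lambda_t$.
\end{enumerate}
By construction $V_{t,s}\Lambda_s=\Lambda_t$.

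Optionally, one can also make $V_{t,s}$ Hermiticity preserving, so that it is closer to the physical propagators of Section~\ref{sec:schrodinger-heisenberg}. Since $\Lambda_s$ is Hermiticity preserving, its image and kernel are $*$-invariant, and so is the Hilbert--Schmidt orthogonal complement of $\im\Lambda_s$. Hence $v_{t,s}$ commutes with the adjoint operation and the zero extension preserves Hermiticity.

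There is no real obstacle here. The only point needing care is the well-definedness in step (i), which is exactly where the kernel inclusion enters. One should also make clear that no complete positivity or trace preservation of $V_{t,s}$ is asserted; those require the stronger norm conditions discussed afterwards.
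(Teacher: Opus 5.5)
Your proof is correct and follows essentially the same route as the paper. The paper only cites the Schr\"odinger statement from \cite{CRS}, but its proof of the Heisenberg analogue, Proposition~\ref{prop:kernel}, is exactly your argument: the kernel inclusion gives well-definedness of the reduced propagator on the image, which is then extended by composing with a (Hermiticity-preserving) projection onto that image; the equivalent generalized-inverse form $\Lambda_t\Lambda_s^{-}$ is the one treated in Proposition~\ref{prop:ginv}.
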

In this case the reduced propagator
\begin{equation}\label{eq:reduced_Schro}
 v_{t,s}: \im\Lambda_s \longrightarrow\mathcal  \im\Lambda_t ,
\end{equation}
defined by

\[  v_{t,s}(\Lambda_s(X))=\Lambda_t(X) ,  \]
are uniquely determined, Hermiticity preserving, and trace-preserving (on $\im\Lambda_s$). These reduced propagators $v_{t,s}$ can be extended to $V_{t,s} : \M \to \M$ which are Hermiticity preserving \cite{CRS}

\[  v_{t,s} = V_{t,s}\Big|_{\im\Lambda_s} . \]
Trace-preservation, instead, is guaranteed to hold only on $\im\Lambda_s$ and might not hold on the whole $\M$ \cite{heinosaari_extending_2012}. 
Of course the extension $V_{t,s}$ is highly non unique.

\begin{proposition}[Heisenberg divisibility]\label{prop:kernel}
A family $\Lambda^*_t$ of UCP maps is  Heisenberg divisible if and only if
\begin{equation}\label{eq:kernel}
 \ker\Lambda^*_s\subseteq\ker\Lambda^*_t\qquad(t\geq s).
\end{equation}
In this case the reduced propagator
\begin{equation}\label{eq:reduced}
 w_{t,s}^*:\mathcal S_s\longrightarrow\mathcal S_t,\qquad
 w_{t,s}^*(\Lambda^*_s(X))=\Lambda^*_t(X),\qquad \mathcal S_s:=\im\Lambda^*_s,
\end{equation}
is uniquely determined, Hermiticity preserving, and unital. It has a unital Hermiticity-preserving, unital linear extension {$W_{t,s}^*$} to $\M$.
\end{proposition}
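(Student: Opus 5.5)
The plan is pure linear algebra on $\M$, followed by a check of the structural properties; positivity plays no role here. For necessity, assume $\Lambda^*_t=W^*_{t,s}\Lambda^*_s$ for some linear $W^*_{t,s}:\M\to\M$. If $\Lambda^*_s(X)=0$, then $\Lambda^*_t(X)=W^*_{t,s}(0)=0$, which gives \eqref{eq:kernel}.

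For sufficiency, assume \eqref{eq:kernel} and define $w^*_{t,s}$ on $\mathcal S_s=\im\Lambda^*_s$ by $w^*_{t,s}(\Lambda^*_s(X)):=\Lambda^*_t(X)$. The formula is well defined: if $\Lambda^*_s(X)=\Lambda^*_s(X')$, then $X-X'\in\ker\Lambda^*_s\subseteq\ker\Lambda^*_t$, so $\Lambda^*_t(X)=\Lambda^*_t(X')$. The map is linear because $\Lambda^*_s$ and $\Lambda^*_t$ are. It is unique because any $W^*_{t,s}$ with $\Lambda^*_t=W^*_{t,s}\Lambda^*_s$ must agree with this formula on every element $\Lambda^*_s(X)$ of $\mathcal S_s$. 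It is unital because $\mathbb 1=\Lambda^*_s(\mathbb 1)\in\mathcal S_s$ and $w^*_{t,s}(\mathbb 1)=\Lambda^*_t(\mathbb 1)=\mathbb 1$.

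For Hermiticity preservation, note that $\mathcal S_s$ is self-adjoint, since $\Lambda^*_s(X)^*=\Lambda^*_s(X^*)$ for a positive map. Hence $w^*_{t,s}(\Lambda^*_s(X)^*)=\Lambda^*_t(X^*)=\Lambda^*_t(X)^*$. Self-adjointness of $\mathcal S_s$ also shows that $\mathcal S_s$ is an operator system, although that is not needed at this point.

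The only step needing care is the extension. The aim is a linear map $W^*_{t,s}:\M\to\M$ that restricts to $w^*_{t,s}$ and is still Hermiticity preserving; unitality is then automatic since $\mathbb 1\in\mathcal S_s$. I would work with the real subspace $\mathcal S_s^h$ of Hermitian elements of $\mathcal S_s$. Because $\mathcal S_s$ is self-adjoint, $\mathcal S_s=\mathcal S_s^h\oplus i\mathcal S_s^h$. Choose a real complement $\mathcal C$ of $\mathcal S_s^h$ inside the real space $\M^h$ of Hermitian matrices, for example its Hilbert--Schmidt orthogonal complement. Then define $W^*_{t,s}$ on $\M^h=\mathcal S_s^h\oplus\mathcal C$ as the real-linear map equal to $w^*_{t,s}$ on $\mathcal S_s^h$ and to $0$ (or any Hermitian-valued map) on $\mathcal C$. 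Finally extend complex-linearly via $\M=\M^h\oplus i\M^h$. The result is Hermiticity preserving by construction, and it agrees with $w^*_{t,s}$ on $\mathcal S_s$ because it agrees on $\mathcal S_s^h$ and both maps are complex linear.

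Equivalently, one can take $W^*_{t,s}=w^*_{t,s}\circ P$, where $P$ is the Hilbert--Schmidt orthogonal projection onto $\mathcal S_s$. This $P$ is Hermiticity preserving: since $\mathcal S_s$ is $*$-invariant and the adjoint is an antiunitary isometry for the Hilbert--Schmidt inner product, $\mathcal S_s^\perp$ is also $*$-invariant, so $P$ commutes with the adjoint. I regard this verification as the main (mild) obstacle. Everything else is bookkeeping.
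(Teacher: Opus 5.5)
Your proof is correct and follows the paper's route. Like the paper, you get well-definedness of $w^*_{t,s}$ from the kernel inclusion and unitality from $I=\Lambda^*_s(I)$, and you extend via $W^*_{t,s}=w^*_{t,s}P_s$ with a Hermiticity-preserving projection $P_s$ onto $\mathcal S_s$; you prove Hermiticity preservation directly as $w^*_{t,s}(Y^*)=w^*_{t,s}(Y)^*$ rather than through the Hermitian preimage $\tfrac12(X+X^*)$, which is equivalent, and your check that the Hilbert--Schmidt orthogonal projection commutes with the adjoint supplies the existence of such a $P_s$, which the paper takes for granted.
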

\begin{proof}

Factorization immediately implies~\eqref{eq:kernel}. Conversely, assume \eqref{eq:reduced}.
For $Y\in\mathcal S_s$, choose $X\in\M$ such that $Y=\Lambda^*_s(X)$, and define 

\[ w_{t,s}^*(Y):=\Lambda^*_t(X). \] 
This definition does not depend on the choice of $X$. Indeed, if $\Lambda^*_s(X)=\Lambda^*_s(X')$, 
then 

\[ X-X'\in\ker\Lambda^*_s\subseteq\ker\Lambda^*_t, \] 
and consequently $\Lambda^*_t(X)=\Lambda^*_t(X')$.  Thus \(w_{t,s}^*\) is well defined {and unique}. Moreover, its range is exactly \(\mathcal S_t\). 
Since \(\Lambda^*_s\) and \(\Lambda^*_t\) are unital, 

\[ w_{t,s}^*(I) =w_{t,s}^*\bigl(\Lambda^*_s(I)\bigr) =\Lambda^*_t(I) =I. \] 
Hence \(w_{t,s}^*\) is unital. To  prove that \(w_{t,s}^*\) is Hermiticity preserving. Let \(Y=Y^*\in\mathcal S_s\), and choose \(X\in\M\) such that \(\Lambda^*_s(X)=Y\). Since \(\Lambda^*_s\) is Hermiticity preserving, 

\[ \Lambda^*_s\left(\frac{X+X^*}{2}\right) = \frac{\Lambda^*_s(X)+\Lambda^*_s(X)^*}{2} =Y. \] 
Therefore the Hermitian operator $H:=\frac 12 (X+X^*)$  
is also a preimage of \(Y\). It follows that $w_{t,s}^*(Y)=\Lambda^*_t(H)$, which is Hermitian because \(\Lambda^*_t\) is Hermiticity preserving. Thus \(w_{t,s}^*\) is Hermiticity preserving. 
Finally, to extend \(w_{t,s}^*\) from \(\mathcal S_s\) to all of \(\M\) define $W^*_{t,s} := w_{t,s}^* P_s$, where $P_s$ is an arbitrary Hemiticity-preserving linear projection onto $\mathcal S_s$. 
\end{proof}

Using the well known relations \cite{Horn}

\begin{equation}\label{eq:orthogonal}
 \im \Lambda^*=(\ker \Lambda)^\perp,\qquad \im \Lambda=(\ker \Lambda^*)^\perp ,
\end{equation}
one observes that already at the level of linear maps Schr\"odinger and Heisenberg divisibility have different existence conditions:

\begin{align}
 \Lambda_t={V_{t,s}} \Lambda_s
 &\quad\Longleftrightarrow\quad
 \ker\Lambda_s\subseteq\ker\Lambda_t
 \quad\Longleftrightarrow\quad
 \im\Lambda^*_t\subseteq\im\Lambda^*_s,
 \label{eq:singular-Sch-condition}\\
 \Lambda^*_t={W_{t,s}^*} \Lambda^*_s
 &\quad\Longleftrightarrow\quad
 \ker\Lambda^*_s\subseteq\ker\Lambda^*_t
 \quad\Longleftrightarrow\quad
 \im\Lambda_t\subseteq\im\Lambda_s.
 \label{eq:singular-Heis-condition}
\end{align}
Whenever the corresponding kernel inclusion holds, the dynamics fixes
a unique reduced propagator on the accessible range,
\begin{align}
 v_{t,s}:\im\Lambda_s&\longrightarrow\im\Lambda_t,
 &v_{t,s}(\Lambda_s(X))&:=\Lambda_t(X),
 \label{eq:reduced-Sch-propagator}\\
 w_{t,s}^*:\im\Lambda^*_s&\longrightarrow\im\Lambda^*_t,
 &w_{t,s}^*(\Lambda^*_s(X))&:=\Lambda^*_t(X).
 \label{eq:reduced-Heis-propagator}
\end{align}
These maps satisfy their composition rules intrinsically on the
accessible ranges: for $t \ge u \ge s$

\begin{equation}
    \label{eq:comp_law_reduced_props}
    v_{t,s} = v_{t,u}   v_{u,s} \ , \quad  w_{t,s}^* = w_{t,u}^*   w_{u,s}^* .
\end{equation}
Indeed, for every $X\in\M$, one finds for Heisenberg propagators (the same applies for $v_{t,s}$)
\[
 w_{t,u}^*  w_{u,s}^*(\Lambda^*_s(X))
 =w_{t,u}^*(\Lambda^*_u(X))
 =\Lambda^*_t(X)
 =w_{t,s}^*(\Lambda^*_s(X)).
\]
This identity is an equality of maps on $\mathcal S_s$.  If
$W_{t,s}^*:\M\to\M$ is an arbitrary extension of the reduced maps, then
$W_{t,u}^*  W_{u,s}^*$ and $W_{t,s}^*$ necessarily agree on $\mathcal S_s$,
but they need not agree on all of $\M$.  Thus pairwise UCP
extendibility does not by itself make independently chosen extensions
a cocycle.  The distinction between the intrinsic maps $w_{t,s}^*$ and
their nonunique extensions $W_{t,s}^*$ will be used throughout.

%Full propagators on $\M$ --- 
%of $V_{t,s}$ and $w_{t,s}^*$ ---  are generally nonunique; independently
%chosen extensions need not satisfy a composition law on $\M$. Note, however, that

%\begin{equation}
%    V_{t,s}\Big|_{\im \Lambda_s} = V_{t,u}   V_{u,s}\Big|_{\im \Lambda_s}   \quad {\rm and} \quad  %w_{t,s}^*\Big|_{\im \Lambda^*_s} = w_{t,u}^*   w_{u,s}^*\Big|_{\im \Lambda^*_s} ,
%\end{equation}
%that is, the local composition law is restored on the image of $\Lambda_s$ (or $\Lambda^*_s$). 

%%%%%%%%%%%%%%%%%%%%%%%%%%%%%%%%%%%%%%%%%%%

\subsection{Heisenberg CP-divisibility}

%\fs{\it Subsection?}

%\section{A complete operator-norm characterization}
Each $\mathcal S_s$ is a complex linear, self-adjoint subspace containing $I$, and thus an operator system with the matrix order inherited from $\M$. Let us recall two well known properties \cite{Paulsen,Bhatia,Stormer}
(for completeness we provide simple proofs).

\begin{proposition}\label{lem:order}

Let $\mathcal S$ be an operator system and $\varphi:\mathcal S\to\mathsf M_d$ a unital Hermiticity-preserving linear map. Then $\varphi$ is positive if and only if
\begin{equation}
 \norm{\varphi(Y)}_\infty\leq\norm{Y}_\infty \qquad (Y=Y^*\in\mathcal S).
\end{equation}
\end{proposition}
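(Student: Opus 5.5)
The plan is to prove the two implications separately, using only the order structure of Hermitian matrices and the fact that $\mathcal S$ contains $I$. The key observation is that for Hermitian $Y$ one has $\norm{Y}_\infty\le 1$ if and only if $-I\le Y\le I$.

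For the direction ``positive $\Rightarrow$ contractive on Hermitian elements'', let $Y=Y^*\in\mathcal S$ and put $r:=\norm{Y}_\infty$. Then $rI\pm Y\ge0$, and both of these elements lie in $\mathcal S$ because $I\in\mathcal S$ and $\mathcal S$ is a linear subspace. Positivity and unitality of $\varphi$ give $rI\pm\varphi(Y)\ge0$. Since $\varphi$ is Hermiticity preserving, $\varphi(Y)$ is Hermitian, so this yields $\norm{\varphi(Y)}_\infty\le r$.

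For the converse, let $P\in\mathcal S$ with $P\ge0$. If $P=0$ there is nothing to prove, so assume $P\neq0$. By rescaling we may take $\norm{P}_\infty=1$, so that $0\le P\le I$. Then $Y:=2P-I$ is Hermitian, lies in $\mathcal S$, and satisfies $-I\le Y\le I$, hence $\norm{Y}_\infty\le1$. By hypothesis $\norm{\varphi(Y)}_\infty\le1$. Because $\varphi(Y)$ is Hermitian, this gives $\varphi(Y)\ge -I$. Using unitality, $\varphi(Y)=2\varphi(P)-I$, so $\varphi(P)\ge0$.

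I expect no real obstacle here. The only care needed is to note where each hypothesis is used: Hermiticity preservation is what makes the operator norm of $\varphi(Y)$ equivalent to the two-sided order bound, and $I\in\mathcal S$ is what allows the auxiliary elements $rI\pm Y$ and $2P-I$ to be formed inside the domain of $\varphi$.
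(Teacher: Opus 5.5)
Your proof is correct and follows essentially the same route as the paper. The forward direction is the same order-interval argument, and for the converse you apply the norm bound to $2P-I$ where the paper uses $I-Z$ with $Z=Y/\norm{Y}_\infty$, which is the same idea up to an affine change.
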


\begin{proof}
Positivity and unitality preserve $-rI\leq Y\leq rI$. Conversely, for $0\leq Y\in\mathcal S$, $Y\ne0$, set $Z=Y/\norm{Y}_\infty$. Then $\norm{I-Z}_\infty\leq1$, so $\norm{I-\varphi(Z)}_\infty\leq1$. Since $\varphi(Z)$ is Hermitian, all its eigenvalues belong to $[0,2]$. Thus $\varphi(Y)\geq0$. \end{proof}

Applying the same argument to $\id_k\otimes \varphi$ we obtain characterisation of $k$-positive unital maps.

\begin{proposition} \label{lem:dpositive}
A $d$-positive map from an operator system $\mathcal{S}$ into $\mathsf M_d$ is completely positive.

\end{proposition}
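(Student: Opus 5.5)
The plan is to reduce to Choi's theorem for maps defined on the full matrix algebra $\M$ by way of Arveson's extension theorem, using $d$-positivity only at the one point where it is needed. Let $\varphi:\mathcal S\to\mathsf M_d$ be $d$-positive, with $\mathcal S\subseteq\mathsf M_n$ an operator system. I would first treat the unital case, which is the one relevant here since the reduced propagators $w^*_{t,s}$ are unital, and then remove unitality at the end.

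\emph{Step 1 (matrix-level contractivity).} By the remark following Proposition~\ref{lem:order}, applying that proposition to $\id_d\otimes\varphi$ on the operator system $\mathsf M_d\otimes\mathcal S$ shows that $d$-positivity of a unital $\varphi$ is equivalent to
\[
\norm{(\id_d\otimes\varphi)(Y)}_\infty\le\norm{Y}_\infty
\]
for Hermitian $Y$. A standard $2\times2$ dilation, replacing $Y$ by $\begin{pmatrix}0&Y\\Y^*&0\end{pmatrix}$, extends this to all $Y$, at the cost of a factor $2$ in the ancilla size. That cost is awkward, so I would avoid it and use positivity directly, as follows.

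\emph{Step 2 (the key step: reduce to a Choi-type criterion on $\mathcal S$).} The main obstacle is that $\varphi$ is not defined on all of $\mathsf M_n$, so Choi's matrix criterion cannot be applied verbatim. The cleanest route is Smith's lemma: a map into $\mathsf M_d$ from an operator system satisfies $\norm{\varphi}_{cb}=\norm{\varphi}_d$. I would prove this as follows. Fix $k>d$ and $Y\in\mathsf M_k\otimes\mathcal S$. The norm $\norm{(\id_k\otimes\varphi)(Y)}$ is attained on unit vectors $\xi,\eta\in\mathbb C^k\otimes\mathbb C^d$. Each such vector has Schmidt rank at most $d$, so it is supported on $E\otimes\mathbb C^d$ for some isometry $E:\mathbb C^d\to\mathbb C^k$, and similarly for $\eta$ with an isometry $F$. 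Merging the two supports into a single subspace of dimension at most $2d$ is not enough on its own. Instead, one writes
\[
\langle\xi,(\id_k\otimes\varphi)(Y)\eta\rangle=\langle\xi',(\id_d\otimes\varphi)\bigl((E^*\otimes I)Y(F\otimes I)\bigr)\eta'\rangle,
\]
and compressions with contractions do not increase norms. This gives $\norm{\id_k\otimes\varphi}\le\norm{\id_d\otimes\varphi}$.

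\emph{Step 3 (conclude).} For unital $\varphi$, $d$-positivity together with the Hermitian version of Step 1, upgraded to the full norm by Paulsen's lemma (a unital map is positive if and only if it is contractive; this works in the operator-system setting by applying Proposition~\ref{lem:order} to the $2\times2$ Paulsen system), shows that $\id_d\otimes\varphi$ is a unital contraction. Step 2 then makes every $\id_k\otimes\varphi$ a unital contraction, hence positive by the converse direction of Proposition~\ref{lem:order}. So $\varphi$ is completely positive.

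\emph{Removing unitality.} For non-unital $\varphi$, I would take $P=\varphi(I)\ge0$. If $P$ is invertible, I would replace $\varphi$ by $P^{-1/2}\varphi(\cdot)P^{-1/2}$, which is unital and still $d$-positive. If $P$ is not invertible, I would perturb to $\varphi+\epsilon\,\omega(\cdot)I$ with $\omega$ a faithful state restricted to $\mathcal S$, and let $\epsilon\to0$.
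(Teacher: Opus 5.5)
There is a genuine gap in Step~3, where you upgrade the Hermitian contractivity of $\id_d\otimes\varphi$ to contractivity on all of $\mathsf M_d(\mathcal S)$. Hermitian contractivity is all that $d$-positivity gives you via Proposition~\ref{lem:order}. The principle you invoke holds in only one direction. A unital contraction on an operator system is positive, but a unital positive map need not be contractive. Paulsen's map $a+bz+c\bar z\mapsto\begin{pmatrix}a&2b\\2c&a\end{pmatrix}$ on $\operatorname{span}\{1,z,\bar z\}\subset C(\mathbb T)$ is unital and positive but has norm $2$. Contractivity of a unital map follows from $2$-positivity, not from positivity.

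Passing to the Paulsen system does not help. For $\psi=\id_d\otimes\varphi$, the associated Paulsen map is just $\id_{2d}\otimes\varphi$ restricted to elements $\begin{pmatrix}\lambda I&Y\\Z^*&\mu I\end{pmatrix}\in\mathsf M_{2d}(\mathcal S)$. Its positivity is therefore a condition at matrix level $2d$, which is exactly the doubling of the ancilla you said in Step~1 you wanted to avoid. Verifying it through Proposition~\ref{lem:order} instead amounts to bounding $\norm{\psi(Y)}_\infty$ by $\norm{Y}_\infty$, which is circular.

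Your chain cannot do without this full-norm bound at level $d$. Step~2 is a correct proof of Smith's lemma, but it uses two different vectors $\xi\neq\eta$. The compression $(E^*\otimes I)Y(F\otimes I)$ is then not Hermitian even when $Y$ is. As written, the argument does not close.

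The repair is to run your compression with a single vector. For Hermitian $Y\in\mathsf M_k(\mathcal S)$, the norm of $(\id_k\otimes\varphi)(Y)$ is attained as $|\langle\xi,(\id_k\otimes\varphi)(Y)\xi\rangle|$ at a unit eigenvector $\xi=(E\otimes I)\xi'$. The compression $(E^*\otimes I)Y(E\otimes I)$ is then Hermitian with norm at most $\norm{Y}_\infty$. So Hermitian contractivity propagates from level $d$ to every level, and Proposition~\ref{lem:order} gives positivity.

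More directly still, and this is the paper's proof, you can avoid norms entirely. For $Y\geq0$ in $\mathsf M_n(\mathcal S)$ and any $\xi\in\mathbb C^n\otimes\mathbb C^d$ of Schmidt rank $r\le d$, write $\xi=(V\otimes I)\eta$ with an isometry $V:\mathbb C^r\to\mathbb C^n$. Then $\langle\xi,(\id_n\otimes\varphi)(Y)\xi\rangle=\langle\eta,(\id_r\otimes\varphi)\bigl((V^*\otimes I)Y(V\otimes I)\bigr)\eta\rangle\geq0$ by $r$-positivity.

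This argument needs neither unitality nor Smith's lemma. Your reduction to the unital case and the $\epsilon$-perturbation are both correct but become unnecessary. Note also that the Arveson extension announced in your opening sentence is never actually used.
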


\begin{proof}
Let $Y\in\mathsf M_n(\mathcal S)$ be positive. Any vector $\xi\in\mathbb C^n\otimes\mathbb C^d$ has Schmidt rank $r\leq d$, so $\xi=(V\otimes I)\eta$ for an isometry $V:\mathbb C^r\to\mathbb C^n$. The compression $(V^*\otimes I)Y(V\otimes I)$ lies in $\mathsf M_r(\mathcal S)_+$. Since $d$-positivity implies $r$-positivity, the expectation of $(\id_n\otimes \varphi)(Y)$ in $\xi$ is nonnegative. This holds for every $\xi$ and every $n$.
\end{proof}

\begin{theorem}[Heisenberg CP-divisibility]\label{thm:main}

Let $\Lambda^*_t:\M\to\M$ be UCP. The following are equivalent:
\begin{enumerate}[label=(\roman*),nosep]
\item For every $t\geq s$ there exists a UCP map $W_{t,s}^*$ with $\Lambda^*_t=W_{t,s}^*\Lambda^*_s$, 

\item Kernels satisfy~\eqref{eq:kernel}, and every reduced map $w_{t,s}^*$ is completely positive.

\item For every $t\geq s$ and every Hermitian $X\in\mathsf M_d\otimes\M$,

\begin{equation}\label{eq:complete-norm}
 \norm{(\id_d\otimes\Lambda^*_t)(X)}_\infty
 \leq\norm{(\id_d\otimes\Lambda^*_s)(X)}_\infty.
\end{equation}
\end{enumerate}
Equivalently, the norm inequality holds with an ancilla of arbitrary finite dimension.

\end{theorem}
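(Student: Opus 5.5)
The plan is to prove the cycle (i)$\Rightarrow$(iii)$\Rightarrow$(ii)$\Rightarrow$(i), and then to handle the remark about arbitrary ancillas separately.

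\emph{(i)$\Rightarrow$(iii).} If $\Lambda^*_t=W^*_{t,s}\Lambda^*_s$ with $W^*_{t,s}$ UCP, then $\id_d\otimes W^*_{t,s}$ is unital and positive on $\mathsf M_d\otimes\M$. For Hermitian $X$, put $Y=(\id_d\otimes\Lambda^*_s)(X)$; this is Hermitian and satisfies $(\id_d\otimes\Lambda^*_t)(X)=(\id_d\otimes W^*_{t,s})(Y)$. The forward direction of Proposition~\ref{lem:order}, applied with $\mathcal S=\mathsf M_d\otimes\M$, gives \eqref{eq:complete-norm}. The same argument works for an ancilla of any dimension $n$, since $\id_n\otimes W^*_{t,s}$ is still positive and unital.

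\emph{(iii)$\Rightarrow$(ii).} This is the step that needs the most care.

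First I derive the kernel inclusion \eqref{eq:kernel} from the norm inequality.
\begin{itemize}
\item Take $X\in\ker\Lambda^*_s$.
\item Its Hermitian parts $(X+X^*)/2$ and $(X-X^*)/(2i)$ also lie in $\ker\Lambda^*_s$, because $\Lambda^*_s$ is Hermiticity preserving.
\item For a Hermitian $H\in\ker\Lambda^*_s$, apply \eqref{eq:complete-norm} to $E_{11}\otimes H$. This gives $\norm{\Lambda^*_t(H)}_\infty\le\norm{\Lambda^*_s(H)}_\infty=0$.
\item Hence $H\in\ker\Lambda^*_t$, and so $X\in\ker\Lambda^*_t$.
\end{itemize}
Proposition~\ref{prop:kernel} then yields the unique reduced map $w^*_{t,s}:\mathcal S_s\to\mathcal S_t$. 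It is unital and Hermiticity preserving.

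Next I show that $w^*_{t,s}$ is $d$-positive.
\begin{itemize}
\item The matrix-level space $\mathsf M_d(\mathcal S_s)=\mathsf M_d\otimes\mathcal S_s$ is exactly the image of $\id_d\otimes\Lambda^*_s$.
\item Every Hermitian element of it has a Hermitian preimage, by the same averaging trick used in the proof of Proposition~\ref{prop:kernel}.
\item On this space, $\id_d\otimes w^*_{t,s}$ is well defined, unital, and Hermiticity preserving.
\item For Hermitian $Y=(\id_d\otimes\Lambda^*_s)(X)$, condition \eqref{eq:complete-norm} says precisely $\norm{(\id_d\otimes w^*_{t,s})(Y)}_\infty\le\norm{Y}_\infty$.
\end{itemize}
Proposition~\ref{lem:order}, applied to the operator system $\mathsf M_d(\mathcal S_s)\subseteq\mathsf M_d\otimes\M$ and the map $\id_d\otimes w^*_{t,s}$, then shows that this map is positive. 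So $w^*_{t,s}$ is $d$-positive, and Proposition~\ref{lem:dpositive} upgrades this to complete positivity. The codomain there is $\M$, and $\mathcal S_t\subseteq\M$, so the proposition applies.

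The main subtlety in this step is bookkeeping. One has to check that $\mathsf M_d(\mathcal S_s)$ carries the order inherited from $\mathsf M_d\otimes\M$, so that Proposition~\ref{lem:order} applies verbatim there. One also has to check that the norm inequality for all Hermitian $X$ covers all Hermitian elements of $\mathsf M_d(\mathcal S_s)$, which is what the Hermitian-preimage remark ensures.

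\emph{(ii)$\Rightarrow$(i).} Here $w^*_{t,s}:\mathcal S_s\to\M$ is a unital completely positive map on an operator system. The Arveson extension theorem \cite{Arveson} gives a UCP map $W^*_{t,s}:\M\to\M$ extending it. Then $W^*_{t,s}\Lambda^*_s(X)=w^*_{t,s}(\Lambda^*_s(X))=\Lambda^*_t(X)$ for all $X$, which is (i).

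\emph{Arbitrary ancillas.} The inequality with an ancilla of any dimension $n$ follows from (i) by the first paragraph. Conversely, restricting to $n\ge d$ (embedding $\mathsf M_d$ in $\mathsf M_n$) recovers (iii). Therefore all these formulations are equivalent.
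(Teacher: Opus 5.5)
Your proposal is correct and follows the paper's proof essentially step for step. You use the same cycle (i)$\Rightarrow$(iii)$\Rightarrow$(ii)$\Rightarrow$(i): kernel inclusion by testing on $E_{11}\otimes H$, $d$-positivity of $w^*_{t,s}$ via Hermitian preimages and Proposition~\ref{lem:order}, the upgrade to complete positivity by Proposition~\ref{lem:dpositive}, and Arveson's extension theorem for the last step, with the arbitrary-ancilla remark spelled out a bit more explicitly than the paper does.
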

\begin{proof}
A UCP propagator is contractive on Hermitian matrices at every {amplification} ${\rm id}_k \otimes W^*_{t,s}$, proving (i)$\Rightarrow$(iii). Assume (iii). Testing {Eq.~\eqref{eq:complete-norm} on operators of the form} $|1\rangle\langle1|\otimes X$ gives unamplified contractivity. A Hermitian element annihilated by $\Lambda^*_s$ is therefore annihilated by $\Lambda^*_t$. Splitting an arbitrary kernel element into Hermitian and anti-Hermitian parts proves~\eqref{eq:kernel}.

Define $w_{t,s}^*$ by~\eqref{eq:reduced}. Every Hermitian $Y\in\mathsf M_d(\mathcal S_s)$ has a Hermitian preimage $X$ under $\id_d\otimes\Lambda^*_s$: lift its entries and then take the Hermitian part of the resulting block matrix. Consequently,
\[
 \norm{(\id_d\otimes w_{t,s}^*)(Y)}_\infty
 =\norm{(\id_d\otimes\Lambda^*_t)(X)}_\infty
 \leq\norm{Y}_\infty.
\]
Proposition~\ref{lem:order} proves $d$-positivity, and Lemma~\ref{lem:dpositive} proves complete positivity. This establishes {(iii)$\Rightarrow$(ii).} 
{Lastly, assume (ii).} Due to Arveson's theorem \cite{Arveson} any UCP map $\varphi : \mathcal{S} \to \M$ can be extended to a UCP map $\Phi : \M \to \M$. Therefore, we may extend each $w_{t,s}^*:\mathcal S_s\to\M$ to a UCP map $W_{t,s}^*:\M\to\M$. Since $I\in\mathcal S_s$, the extension satisfies $W_{t,s}^*(I)=w_{t,s}^*(I)=I$. {Therefore, (i) follows.}
\end{proof}

This proof isolates the difference from the Schr\"odinger construction of Ref.~\cite{CRS}: the normalization condition $W_{t,s}^*(I)=I$ is prescribed by the value of a single element $I \in \mathcal{S}_s$, already in the domain of extension. It makes no assertion that taking adjoints of a Schr\"odinger reduced propagator solves the same factorization problem. Indeed, it was shown \cite{CRS} that each CPTP reduced propagator $v_{t,s} : \im \Lambda_s \to \M$ can be extended to a CP map $V_{t,s} : \M \to \M$; however,  trace-preservation is no longer guaranteed (of course $V_{t,s}$ is trace-preserving on $\im \Lambda_s$). 

In \cite{CRS} it was shown that if $\{\Lambda_t\}_{t \ge 0}$ is image non-increasing, that is $\im \Lambda_t \subseteq \im \Lambda_s$ (for $t > s$), then it is Schr\"odinger CP-divisible if and only if 

\begin{equation} \label{S-CPTP}
\norm{(\id_d\otimes\Lambda_t)(X)}_1
 \leq\norm{(\id_d\otimes\Lambda_s)(X)}_1 ,
\end{equation}
for all Hermitian $X \in \M \otimes \M$. Now, the condition $\im \Lambda_t \subseteq \im \Lambda_s$ is equivalent to $\ker\Lambda^*_s \subseteq \ker\Lambda^*_t$, and hence we arrive at the following

\begin{corollary} If the family $\{\Lambda_t\}_{t \ge 0}$ satisfies (\ref{S-CPTP}) and it is Heisenberg divisible (i.e. $\ker\Lambda^*_s \subseteq \ker\Lambda^*_t$),   then  it is Schr\"odinger CP-divisible.
\label{cor:inclusion_Kers}
\end{corollary}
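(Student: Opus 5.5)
The corollary follows by combining the cited result of \cite{CRS} with the duality relations \eqref{eq:orthogonal}. No new analytic estimate is needed. The plan has three steps: translate the Heisenberg divisibility hypothesis into the image non-increasing property required by \cite{CRS}, invoke that theorem to get CPTP propagators from \eqref{S-CPTP}, and check that the objects produced are exactly the $V_{t,s}$ in the definition of Schr\"odinger CP-divisibility.

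\emph{Step 1.} We assume $\ker\Lambda^*_s\subseteq\ker\Lambda^*_t$ for all $t\geq s$, and take orthogonal complements with respect to the Hilbert--Schmidt inner product. This reverses the inclusion: $(\ker\Lambda^*_t)^\perp\subseteq(\ker\Lambda^*_s)^\perp$. By \eqref{eq:orthogonal} this reads $\im\Lambda_t\subseteq\im\Lambda_s$, which is the chain \eqref{eq:singular-Heis-condition}. The family is therefore image non-increasing in the sense of \cite{CRS}.

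\emph{Step 2.} With the family image non-increasing, the result of \cite{CRS} quoted just before the corollary applies. In that setting, Schr\"odinger CP-divisibility is equivalent to the complete trace-norm monotonicity \eqref{S-CPTP} for all Hermitian $X\in\M\otimes\M$. Since \eqref{S-CPTP} is assumed, the equivalence gives, for every $t\geq s$, a CPTP map $V_{t,s}$ on all of $\M$ with $\Lambda_t=V_{t,s}\Lambda_s$.

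\emph{Main obstacle.} The only delicate point is to use the hypothesis of \cite{CRS} in the right direction. Condition \eqref{S-CPTP} alone yields only reduced CPTP propagators $v_{t,s}$ on $\im\Lambda_s$. Their CP extensions to $\M$ may fail to be trace preserving, because trace preservation is imposed on a whole subspace rather than fixed by a single element, unlike unitality in Theorem~\ref{thm:main}. Image non-increase is exactly what repairs this. For the reader's benefit I would note the mechanism: since $\im\Lambda_t\subseteq\im\Lambda_s$, one can set $V_{t,s}:=\Lambda_t\circ E$ or $v_{t,s}\circ P$ for a suitable trace-preserving projection onto $\im\Lambda_s$. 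The precise construction I would defer to \cite{CRS} rather than reprove. Finally, $\Lambda_0=\id$ and the requirement holds for all pairs $t\geq s$, so the resulting $V_{t,s}$ satisfy the definition of Schr\"odinger CP-divisibility given at the start of Section~\ref{sec:divisibility_non_inv}. This concludes the argument.
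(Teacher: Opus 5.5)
Your argument is correct and is exactly the paper's: the kernel inclusion $\ker\Lambda^*_s\subseteq\ker\Lambda^*_t$ becomes the image non-increase $\im\Lambda_t\subseteq\im\Lambda_s$ via \eqref{eq:orthogonal}, and the cited theorem of \cite{CRS} then turns \eqref{S-CPTP} into Schr\"odinger CP-divisibility. I would drop your heuristic aside about $v_{t,s}\circ P$ with a trace-preserving projection onto $\im\Lambda_s$, because such CPTP projections need not exist (cf.\ the qubit remark in Section~\ref{sec:qubit-special}); since you defer the construction to \cite{CRS}, it does not affect the proof.
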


The finite-time inequality~\eqref{eq:complete-norm} is the 
statement.  Assuming that $\Lambda^*_t$ is locally absolutely continuous, Heisenberg CP-divisibility is equivalent to

\begin{equation}   
    \frac{d}{dt}\, \norm{(\id_d\otimes\Lambda^*_t)(X)}_\infty \leq 0 , 
\end{equation}
for almost all $t >0$. 

\begin{remark}
The monotonicity condition

\begin{equation} \label{H-PU}
\norm{\Lambda^*_t(X)}_\infty
 \leq\norm{\Lambda^*_s(X)}_\infty ,
\end{equation}
is equivalent to kernel inclusion \eqref{eq:kernel} together with positivity of all reduced maps $w_{t,s}^*$, by Proposition~\ref{prop:kernel} and Proposition~\ref{lem:order}. Heisenberg P-divisibility implies this condition. Its converse additionally requires positive unital extensions. Arveson's theorem does not supply such extensions for a merely positive reduced map. Positive-map extension problems require separate hypotheses~\cite{Stormer}, {see \cite{Chiribella} for} an example of unextendible norm one positive map on an operator system. 

\end{remark}

%\end{document}

{We now provide two examples of dynamics that are CP-divisible in one picture, no linear propagator exists in the dual picture because of violations of the kernel inclusion condition of Eq.~\eqref{eq:kernel}.}

\begin{example}[Schr\"odinger CP-divisible but not Heisenberg divisible]

Consider a qubit dynamical map

\begin{equation}
 \Lambda_t(\rho)=
 \begin{cases}
 (1-t)\rho+t\tr(\rho)I/2,&0\leq t\leq1,\\
 \tr(\rho)\,{\omega_t},&t\geq1,
 \end{cases}
 \qquad
 {\omega_t}=\tfrac12[I+r(t)\sigma_z],
\end{equation}
where $r$ is continuous, $r(1)=0$, and $0\leq r(t)\leq1$, with $r(t)>0$ at some later time. Before $t=1$, CPTP propagators are depolarizing channels with parameter $(1-t)/(1-s)$. For every $t\geq1$ and $s<t$, the reset channel $X\mapsto\tr(X){\omega_t}$ is a CPTP propagator. The full family is therefore Schr\"odinger CP-divisible.

In contrast, $\Lambda^*_1(\sigma_z)=0$, whereas $\Lambda^*_t(\sigma_z)=r(t)I$ {for $t>1$}. Kernel inclusion fails. The family is not even linearly Heisenberg divisible. Physically, once all initial-state information is erased, Schr\"odinger divisibility allows the prepared state to change; Heisenberg picture left divisibility forbids a vanished observable from reappearing.
\end{example}

%\end{document}

\begin{example}[Heisenberg CP-divisible but not Schr\"odinger divisible]
    
Let {$\mathcal E_z(\rho) \coloneqq (\rho+\sigma_z\rho\sigma_z)/2$ denote complete dephasing map} in the $\sigma_z$ basis. Define the Heisenberg maps by
\begin{equation}
 \Lambda^*_t=
 \begin{cases}
 (1-t)\id+t\mathcal E_z,&0\leq t\leq1,\\
 \Ad_{U_t}  \mathcal E_z,&t\geq1,
 \end{cases}
 \qquad \Ad_U(X)=UXU^*,
\end{equation}
where $U_t$ is continuous and $U_1=I$. For $s<t<1$, 
{%\sout{a dephasing channel with transverse multiplier $(1-t)/(1-s)$ is a UCP propagator} 
the propagator $W^*_{t,s}$ is UCP}. Across $t=1$, use { $W^*_{t,1}=\Ad_{U_t}\mathcal E_z$ }, since $\mathcal E_z\Lambda^*_s=\mathcal E_z$. For $t\geq s\geq1$, use $W_{t,s} = \Ad_{U_tU_s^*}$. Hence the whole family is Heisenberg CP-divisible.
Its Heisenberg range after $t=1$ is
\begin{equation}
 \mathcal S_t=\operatorname{span}\{I,U_t\sigma_zU_t^*\}.
\end{equation}
If $U_t$ rotates the $z$ axis, these two-dimensional ranges are not nested $\im\Lambda_t^*\not\subseteq\im\Lambda^*_s$.  Eq.~\eqref{eq:singular-Sch-condition} %\sout{Corollary~\ref{cor:both}
then rules out linear Schr\"odinger divisibility.  Thus, the two pictures are not equivalent regarding the existence of linear propagators.
\end{example}

\subsection{Generalized-inverse construction of propagators}

Let ${\Lambda_s^*}^-$ be a generalized inverse of $\Lambda^*_s$ \cite{GI_1,GI_2,GI_3}
\begin{equation}\label{eq:inner}
 \Lambda^*_s {\Lambda_s^*}^- \Lambda^*_s=\Lambda^*_s.
\end{equation}
No positivity or unitality of ${\Lambda_s^*}^-$ is assumed. The map $P_s=\Lambda^*_s{\Lambda_s^*}^-$ is a projection onto $\mathcal S_s$. Indeed, one has

\[
P_s^2
 =\Lambda^*_s{\Lambda^*_s}^{-}\Lambda^*_s{\Lambda^*_s}^{-}
 =\Lambda^*_s{\Lambda^*_s}^{-}
 =P_s .
\]

\begin{proposition}\label{prop:ginv}
Under kernel inclusion condition (\ref{eq:kernel}), for any choice of ${\Lambda_s^*}^-$ at each time,
\begin{equation}\label{eq:ginv}
 {W_{t,s}^G}^*:=\Lambda^*_t{\Lambda_s^*}^-
\end{equation}
is a unital linear propagator satisfying
\begin{equation}\label{eq:cocycleG}
 {W^G_{t,u}}^*{W^G_{u,s}}^*={W_{t,s}^G}^*,\qquad t\geq u\geq s.
\end{equation}
Moreover,
\begin{equation}\label{eq:ginv-range}
 {W_{t,s}^G}^*=w_{t,s}^*P_s,\qquad
 \im {W_{t,s}^G}^*=\mathcal S_t,\qquad W^G_{s,s}=P_s.
\end{equation}
\end{proposition}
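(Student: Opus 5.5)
The whole argument rests on the projection $P_s=\Lambda^*_s{\Lambda^*_s}^-$ introduced above. I would first prove the factorization ${W_{t,s}^G}^*=w_{t,s}^*P_s$ and derive the other claims from it.

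\textbf{Factorization.} Since $P_s$ maps $\M$ onto $\mathcal S_s$, the composite $w_{t,s}^*P_s$ is defined on all of $\M$. For any $X\in\M$, write $P_s(X)=\Lambda^*_s({\Lambda^*_s}^-(X))$. By the defining relation of the reduced propagator in Proposition~\ref{prop:kernel}, which is well defined under the kernel inclusion~\eqref{eq:kernel},
\[
w_{t,s}^*\bigl(\Lambda^*_s({\Lambda^*_s}^-(X))\bigr)=\Lambda^*_t\bigl({\Lambda^*_s}^-(X)\bigr).
\]
Hence ${W_{t,s}^G}^*=w_{t,s}^*P_s$.

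\textbf{Propagator, range and unitality.} Restricted to $\mathcal S_s$, the map $P_s$ is the identity, because it is a projection onto $\mathcal S_s$. Therefore ${W_{t,s}^G}^*\Lambda^*_s=w_{t,s}^*\Lambda^*_s=\Lambda^*_t$, so ${W_{t,s}^G}^*$ is a linear propagator. Its range is $w_{t,s}^*(\mathcal S_s)=\mathcal S_t$, since $P_s$ is onto $\mathcal S_s$ and $w_{t,s}^*$ maps $\mathcal S_s$ onto $\mathcal S_t$. Unitality follows because $I\in\mathcal S_s$ gives $P_s(I)=I$, and $w_{t,s}^*$ is unital, so ${W_{t,s}^G}^*(I)=w_{t,s}^*(I)=I$. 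Taking $t=s$, the reduced map $w_{s,s}^*$ is the identity on $\mathcal S_s$, which gives $W^G_{s,s}=P_s$.

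\textbf{Composition law.} I would compute directly:
\[
{W^G_{t,u}}^*{W^G_{u,s}}^*=\Lambda^*_t{\Lambda^*_u}^-\Lambda^*_u{\Lambda^*_s}^-.
\]
This equals $\Lambda^*_t{\Lambda^*_s}^-$ provided $\Lambda^*_t{\Lambda^*_u}^-\Lambda^*_u=\Lambda^*_t$. To see this identity, note that for every $Z$ the element $\Lambda^*_u\bigl({\Lambda^*_u}^-\Lambda^*_u(Z)-Z\bigr)$ vanishes by~\eqref{eq:inner}. So ${\Lambda^*_u}^-\Lambda^*_u(Z)-Z$ lies in $\ker\Lambda^*_u$, which is contained in $\ker\Lambda^*_t$ by~\eqref{eq:kernel} applied with $t\ge u$. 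Applying $\Lambda^*_t$ therefore gives the identity, and hence~\eqref{eq:cocycleG}.

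The only point requiring care is this last step. The composition law depends on the kernel inclusion between the intermediate time $u$ and $t$, rather than on any compatibility between the generalized inverses chosen at different times. That is exactly why the choice of ${\Lambda^*_s}^-$ at each time can be arbitrary. Everything else is routine algebra.
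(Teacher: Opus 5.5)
Your proof is correct and matches the paper's argument. The composition law uses the same identity $\Lambda^*_t{\Lambda^*_u}^-\Lambda^*_u=\Lambda^*_t$, derived from kernel inclusion, and the factorization through $w_{t,s}^*$ is the same as the paper's. The only difference is the order: you obtain unitality, the propagator property and the range from ${W_{t,s}^G}^*=w_{t,s}^*P_s$, whereas the paper obtains them by evaluating $\Lambda^*_t{\Lambda^*_s}^-\Lambda^*_s=\Lambda^*_t$, at $I$ in the case of unitality.
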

\begin{proof}
Equation~\eqref{eq:inner} implies $(\id-{\Lambda_s^*}^-\Lambda^*_s)(\M)\subseteq\ker\Lambda^*_s$. Kernel inclusion yields $\Lambda^*_t{\Lambda_s^*}^-\Lambda^*_s=\Lambda^*_t$. Evaluating this identity at $I$ proves unitality

\[ {W_{t,s}^G}^*(\Lambda^* _s(I)) = \Lambda^*_t(I) \ \Longrightarrow \  {W_{t,s}^G}^*(I)=I . \]
Moreover, $\Lambda^*_t{\Lambda_s^*}^-\Lambda^*_s=\Lambda^*_t$ implies
\[
 \Lambda^*_t {\Lambda^*_u}^- \Lambda^*_u{\Lambda_s^*}^-=\Lambda^*_t{\Lambda_s^*}^-,
\]
which proves the composition law \eqref{eq:cocycleG}. Since $w_{t,s}^*\Lambda^*_s=\Lambda^*_t$, one has 

$$w_{t,s}^*P_s=\Lambda^*_t{\Lambda_s^*}^- = {W_{t,s}^G}^* . $$
The range of ${W_{t,s}^G}^*$ is contained in $\mathcal S_t$ and contains it because ${W_{t,s}^G}^*\Lambda^*_s=\Lambda^*_t$.
\end{proof}
These algebraic identities parallel the generalized-inverse construction of Ref.~\cite{CC}. Unitality here follows without imposing a normalization condition on the complementary subspace. Hermiticity preservation follows if the generalized inverse is chosen to be Hermiticity preserving. Obviously, complete positivity does not follow from~\eqref{eq:ginv}.

\begin{proposition}\label{prop:all}
For a fixed generalized inverse ${\Lambda_s^*}^-$, every {linear} solution of $W\Lambda^*_s=\Lambda^*_t${, with $\Lambda^*_t$ obeying the kernel inclusion \eqref{eq:kernel},} has the form

\begin{equation}\label{eq:all}
 W=\Lambda^*_t{\Lambda_s^*}^-+\Psi (\id-P_s),\qquad P_s=\Lambda^*_s{\Lambda_s^*}^-,
\end{equation}
with $\Psi : \M \to \M$ an arbitrary linear map. Every such solution is unital $W(I)=I$.
\end{proposition}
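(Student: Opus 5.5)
The argument has two halves: every map of the form \eqref{eq:all} solves the factorization, and every solution has that form. Unitality then follows at the end.

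\emph{Every map \eqref{eq:all} is a solution.} Let $W=\Lambda^*_t{\Lambda_s^*}^-+\Psi(\id-P_s)$ with $\Psi$ arbitrary. Composing with $\Lambda^*_s$ gives
\[
W\Lambda^*_s=\Lambda^*_t{\Lambda_s^*}^-\Lambda^*_s+\Psi(\Lambda^*_s-\Lambda^*_s{\Lambda_s^*}^-\Lambda^*_s).
\]
The second term vanishes by \eqref{eq:inner}. In the proof of Proposition~\ref{prop:ginv} we showed, using kernel inclusion \eqref{eq:kernel}, that $\Lambda^*_t{\Lambda_s^*}^-\Lambda^*_s=\Lambda^*_t$. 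Hence $W\Lambda^*_s=\Lambda^*_t$.

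\emph{Every solution has the form \eqref{eq:all}.} Let $W$ be any linear map with $W\Lambda^*_s=\Lambda^*_t$. We decompose it as
\[
W=WP_s+W(\id-P_s).
\]
Since $P_s=\Lambda^*_s{\Lambda_s^*}^-$, the first piece is
\[
WP_s=W\Lambda^*_s{\Lambda_s^*}^-=\Lambda^*_t{\Lambda_s^*}^-.
\]
Taking $\Psi:=W$ then gives exactly \eqref{eq:all}. The parametrization is redundant: only the values of $\Psi$ on $\im(\id-P_s)$ matter. That redundancy is harmless, because the statement only asks for existence of some $\Psi$.

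\emph{Unitality.} Every solution satisfies
\[
W(I)=W(\Lambda^*_s(I))=\Lambda^*_t(I)=I.
\]
Equivalently, $I\in\mathcal S_s$ gives $P_s(I)=I$, so $(\id-P_s)(I)=0$. Thus the $\Psi$ term does not contribute at $I$, and the first term is unital by Proposition~\ref{prop:ginv}.

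There is no serious obstacle. The only point needing care is the identity $\Lambda^*_t{\Lambda_s^*}^-\Lambda^*_s=\Lambda^*_t$, which is the one place the kernel inclusion hypothesis enters. It holds because $(\id-{\Lambda_s^*}^-\Lambda^*_s)(\M)\subseteq\ker\Lambda^*_s\subseteq\ker\Lambda^*_t$, as already shown in the proof of Proposition~\ref{prop:ginv}.
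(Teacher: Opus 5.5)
Your proof is correct and follows the paper's argument essentially step for step. You use the same observation that $(\id-P_s)\Lambda^*_s=0$ kills the $\Psi$-term, the same decomposition $W=WP_s+W(\id-P_s)$ with $\Psi:=W$, and the same unitality argument via $W(I)=W(\Lambda^*_s(I))=\Lambda^*_t(I)=I$. You are slightly more explicit than the paper in noting that the kernel inclusion enters only through $\Lambda^*_t{\Lambda_s^*}^-\Lambda^*_s=\Lambda^*_t$.
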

\begin{proof}
Since \(P_s\) is a projection onto \(\mathcal S_s=\operatorname{Im}\Lambda^*_s\) one has 

\[
(\operatorname{id}-P_s)\Lambda^*_s=0.
\]
Now let \(\Psi:M_d\to M_d\) be arbitrary linear map and define
\[
W_\Psi
 :=\Lambda^*_t{\Lambda^*_s}^{-}
   +\Psi(\operatorname{id}-P_s).
\]
Then
\begin{align*}
W_\Psi\Lambda^*_s
 =\Lambda^*_t{\Lambda^*_s}^{-}\Lambda^*_s
   +\Psi(\operatorname{id}-P_s)\Lambda^*_s  =\Lambda^*_t,
\end{align*}
so every map of the form (\ref{eq:all}) is a solution.

Conversely, suppose that \(W\Lambda^*_s=\Lambda^*_t\). Then
\[
WP_s
 =W\Lambda^*_s{\Lambda^*_s}^{-}
 =\Lambda^*_t{\Lambda^*_s}^{-},
\]
and therefore
\begin{align*}
W
 =WP_s+W(\operatorname{id}-P_s) =\Lambda^*_t{\Lambda^*_s}^{-}
   +W(\operatorname{id}-P_s).
\end{align*}
Thus \(W\) has the form \eqref{eq:all}, with \(\Psi=W\). Finally, every such solution is unital. Indeed, since both
\(\Lambda^*_s\) and \(\Lambda^*_t\) are unital,
\[
W_\Psi(I)
 =W_\Psi\bigl(\Lambda^*_s(I)\bigr)
 =\Lambda^*_t(I)
 =I.
\]
Equivalently, \(I=\Lambda^*_s(I)\in\mathcal S_s\), so
\(P_s(I)=I\) and the arbitrary complementary term
\(\Psi(\operatorname{id}-P_s)\) vanishes at \(I\).
\end{proof}

Note, that the parametrization (\ref{eq:all}) is generally redundant: only the restriction of \(\Psi\) to \(\ker P_s\) affects \(W_\Psi\). {Conversely}, one may ask which extensions $w_{t,s}^*$ admit {a representation via a generalized inverse ${\Lambda^*_s}^-$.}

\begin{proposition}\label{prop:representation} 
Let $\ker\Lambda^*_s\subseteq\ker\Lambda^*_t$, and let $W:\mathsf M_d\to\mathsf M_d$ be a linear propagator,
i.e. $W\Lambda^*_s=\Lambda^*_t$. Then there exists a {generalized inverse} $B : \M \to \M$ of $\Lambda_s^*$ such that $W = \Lambda^*_t B$ 
%$$    \Lambda^*_s W\, \Lambda^*_s = \Lambda^*_s \ , \quad W = \Lambda^*_t B , $$
if and only if
\[
        \operatorname{Im}W
        \subseteq
        \mathcal S_t,
        \qquad
        \mathcal S_t:=\operatorname{Im}\Lambda^*_t.
\]
\end{proposition}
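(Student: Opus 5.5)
The forward direction is immediate: if $W=\Lambda^*_t B$ for some linear $B$, then $\im W\subseteq\im\Lambda^*_t=\mathcal S_t$. The generalized-inverse property of $B$ plays no role here.

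For the converse, assume $\im W\subseteq\mathcal S_t$. We must build a single map $B$ that satisfies two conditions at once:
\[
\Lambda^*_s B\Lambda^*_s=\Lambda^*_s,
\qquad
\Lambda^*_t B=W.
\]
The plan is to build $B$ in two pieces, adapted to a splitting of the domain. Note first that $W$ restricted to $\mathcal S_s$ is forced to equal $w_{t,s}^*$, since $W\Lambda^*_s=\Lambda^*_t$. Fix any generalized inverse $G$ of $\Lambda^*_s$, for instance the Moore--Penrose inverse. By the proof of Proposition~\ref{prop:ginv}, $\Lambda^*_t G\Lambda^*_s=\Lambda^*_t$, and $P_s=\Lambda^*_sG$ is a projection onto $\mathcal S_s$. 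Decompose
\[
W = WP_s + W(\id-P_s) = \Lambda^*_t G + W(\id-P_s),
\]
which is Proposition~\ref{prop:all} with $\Psi=W$. The first term already has the required form. For the second, use $\im W\subseteq\mathcal S_t=\im\Lambda^*_t$: choose any linear right-lift $R:\mathcal S_t\to\M$ with $\Lambda^*_t R=\id_{\mathcal S_t}$, for example $R=(\Lambda^*_t)^{+}$ restricted to $\mathcal S_t$. Then $W(\id-P_s)=\Lambda^*_t\,R\,W(\id-P_s)$.

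The candidate is therefore
\[
B:=G+C,\qquad C:=R\,W(\id-P_s).
\]
By construction $\Lambda^*_tB=W$. The remaining requirement is that $B$ stays a generalized inverse, i.e.\ $\Lambda^*_s C\Lambda^*_s=0$. This holds because $(\id-P_s)\Lambda^*_s=0$, so $C\Lambda^*_s=0$. Hence
\[
\Lambda^*_sB\Lambda^*_s=\Lambda^*_sG\Lambda^*_s=\Lambda^*_s.
\]

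The only delicate point I expect is making sure the correction $C$ does not spoil the inner-inverse identity. Naively lifting $W$ through $\Lambda^*_t$ on all of $\M$ would produce a $B$ with no relation to $\Lambda^*_s$. The fix is to lift only the part of $W$ supported on $\ker P_s$, which $\Lambda^*_s B\Lambda^*_s$ never sees.

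One could add that if $W$ is Hermiticity preserving, then $B$ can be chosen Hermiticity preserving. This follows by taking $G$ and $R$ to be Hermiticity preserving, e.g.\ Moore--Penrose inverses of Hermiticity-preserving maps with respect to the Hilbert--Schmidt inner product. This is not needed for the statement.
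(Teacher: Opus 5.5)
Your proof is correct and is essentially the paper's argument: split $W=WP_s+W(\id-P_s)$, lift the complementary piece through a right inverse of $\Lambda^*_t$ on $\mathcal S_t$ (the paper uses a generalized inverse ${\Lambda^*_t}^-$, which plays the role of your $R$), and observe that this correction is annihilated by $\Lambda^*_s$ on the right. The only cosmetic difference is that the paper takes ${\Lambda^*_s}^-P_s$ rather than $G$ as the first summand of $B$. This changes nothing, since $\Lambda^*_tG=W\Lambda^*_sG=WP_s$ already.
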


\begin{proof}
Suppose first that $ W=\Lambda^*_t B$ for some generalized inverse $B$ of $\Lambda^*_s$.  Then, for every $X\in\mathsf M_d$,
\[
        W(X)=\Lambda^*_t(B(X))
        \in\operatorname{Im}\Lambda^*_t.
\]
Consequently,
\[
        \operatorname{Im}W
        \subseteq
        \operatorname{Im}\Lambda^*_t
        =\mathcal S_t.
\]
Conversely, suppose that $\operatorname{Im}W\subseteq\mathcal S_t$. Choose  some generalized inverses ${\Lambda_s^*}^-$ and ${\Lambda^*_t}^-$ of $\Lambda^*_s$ and
$\Lambda^*_t$, respectively:
\[
        \Lambda^*_s{\Lambda_s^*}^-\Lambda^*_s=\Lambda^*_s,
        \qquad
        \Lambda^*_t {\Lambda^*_t}^-\Lambda^*_t=\Lambda^*_t.
\]
Define the corresponding projectors $P_s=\Lambda^*_s{\Lambda_s^*}^-$ and $P_t:=\Lambda^*_t {\Lambda^*_t}^-$. One has 
$ \operatorname{Im}P_s=\mathcal S_s$ and $\operatorname{Im}P_t=\mathcal S_t$. Now define

\begin{equation}
\label{eq:B-detailed}
        B :=  {\Lambda_s^*}^- P_s + {\Lambda^*_t}^- W(\id-P_s).
\end{equation}
We first verify that $B$ is a generalized inverse of $\Lambda^*_s$. One has

\begin{align}
        B\Lambda^*_s
        =
        {\Lambda_s^*}^-P_s\Lambda^*_s
        + {\Lambda^*_t}^- W(\id-P_s)\Lambda^*_s = {\Lambda_s^*}^-\Lambda^*_s ,
\end{align}
due to

\[
        P_s\Lambda^*_s
        =\Lambda^*_s{\Lambda_s^*}^-\Lambda^*_s
        =\Lambda^*_s,
\]
and $(\id-P_s)\Lambda^*_s=0$. Therefore $\Lambda^*_sB\Lambda^*_s = \Lambda^*_s{\Lambda_s^*}^-\Lambda^*_s = \Lambda^*_s$. 
It remains to prove that $\Lambda^*_tB=W$. One computes

\begin{align}
        \Lambda^*_tB &   = \Lambda^*_t \Big( {\Lambda_s^*}^- P_s + {\Lambda^*_t}^- W(\id-P_s) \Big) =
        \Lambda^*_t {\Lambda_s^*}^- P_s  +\Lambda^*_t {\Lambda^*_t}^- W(\id-P_s) \notag\\
        &=
        (W \Lambda^*_s) {\Lambda_s^*}^-  P_s + W(\id-P_s) = W P_s +  W(\id-P_s)=  W ,
\end{align}
where we used $\Lambda^*_t = W \Lambda^*_s$, $P_s^2 = P_s$, and $P_t W=W$ since $P_t$ acts as an identity map on $\im \Lambda^*_t$ (and we assumed that $\operatorname{Im}W\subseteq\mathcal S_t$).  This completes the proof.
\end{proof}

It should be stresses that the generalized inverse constructed above does depend on both $(t,s)$, i.e. $B = B_{t,s}$, meaning that for all pairs $t > s$ there exists $B_{t,s}$ such that 

\[  W_{t,s}^* = \Lambda^*_t B_{t,s} \ , \qquad \Lambda^*_s B_{t,s} \Lambda^*_s = \Lambda^*_s .  \]
The generalized-inverse construction in Proposition~\ref{prop:ginv} requires the stronger statement: there exists ${\Lambda_s^*}^-$ such that for all pairs $t>s$ $W_{t,s}^* = \Lambda^*_t {\Lambda^*_s}^-$,
that is, in this construction ${\Lambda^*_s}^-$ is universal for all $t > s$.

There is a special generalized-inverse construction based on the following

\begin{theorem}[{Composition law from UCP projection}]\label{thm:projection}
Assume Heisenberg CP-divisibility and suppose that, for every $s$,
there {exists} a UCP projection
\[
 P_s:\M\longrightarrow\mathcal S_s,
 \qquad P_s^2=P_s,
 \qquad \im P_s=\mathcal S_s.
\]
Then
\begin{equation}\label{eq:projection-propagator}
 {W^P_{t,s}}^*=w_{t,s}^*P_s
\end{equation}
is UCP, satisfies ${W^P_{t,s}}^*\Lambda^*_s=\Lambda^*_t$, and obeys the
composition law.  Moreover, it can be written as ${W^P_{t,s}}^*=\Lambda^*_t{\Lambda_s^*}^-$ with
one generalized inverse ${\Lambda_s^*}^-$ for each initial time $s$.
\end{theorem}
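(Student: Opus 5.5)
The plan is to verify the four assertions in turn, using only Theorem~\ref{thm:main}, Proposition~\ref{prop:ginv} and the definition of the reduced maps. First, by Heisenberg CP-divisibility and Theorem~\ref{thm:main}(ii), the kernel inclusion~\eqref{eq:kernel} holds and each reduced map $w_{t,s}^*:\mathcal S_s\to\mathcal S_t$ is completely positive and unital. The map ${W^P_{t,s}}^*=w_{t,s}^*P_s$ is then a composition of two completely positive maps: $P_s:\M\to\mathcal S_s$, and $w_{t,s}^*$ on the operator system $\mathcal S_s$. Complete positivity of the composition follows because $\id_k\otimes P_s$ sends positive elements of $\mathsf M_k(\M)$ into $\mathsf M_k(\mathcal S_s)_+$, where $\id_k\otimes w_{t,s}^*$ is positive. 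Unitality follows from $P_s(I)=I$ (since $I\in\mathcal S_s$ and $P_s$ is unital) and $w_{t,s}^*(I)=I$.

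Second, factorization: since $P_s$ acts as the identity on $\mathcal S_s=\im\Lambda^*_s$, we have $P_s\Lambda^*_s=\Lambda^*_s$, hence ${W^P_{t,s}}^*\Lambda^*_s=w_{t,s}^*\Lambda^*_s=\Lambda^*_t$.

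Third, the composition law, for $t\ge u\ge s$. Compute ${W^P_{t,u}}^*{W^P_{u,s}}^*=w_{t,u}^*P_u w_{u,s}^*P_s$. The image of $w_{u,s}^*$ is $\mathcal S_u$, on which $P_u$ is the identity, so this equals $w_{t,u}^*w_{u,s}^*P_s$. The intrinsic composition law~\eqref{eq:comp_law_reduced_props} on $\mathcal S_s$ then gives $w_{t,s}^*P_s={W^P_{t,s}}^*$. The essential point is that the projection $P_s$ depends only on the initial time, so the intermediate projection $P_u$ is absorbed.

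Finally, the generalized-inverse representation. Here I would use a construction independent of $t$. Fix any generalized inverse ${\Lambda_s^*}^{\sharp}$ of $\Lambda^*_s$ and set ${\Lambda_s^*}^-:={\Lambda_s^*}^{\sharp}P_s$. Then
\[
\Lambda^*_s{\Lambda_s^*}^-\Lambda^*_s=\Lambda^*_s{\Lambda_s^*}^{\sharp}\Lambda^*_s=\Lambda^*_s,
\]
because $P_s\Lambda^*_s=\Lambda^*_s$, so ${\Lambda_s^*}^-$ is a generalized inverse. Moreover $\Lambda^*_s{\Lambda_s^*}^-=P_s$, because $\Lambda^*_s{\Lambda_s^*}^{\sharp}$ is a projection onto $\mathcal S_s$ and therefore fixes $\im P_s=\mathcal S_s$. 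Proposition~\ref{prop:ginv} then yields $\Lambda^*_t{\Lambda_s^*}^-=w_{t,s}^*P_s={W^P_{t,s}}^*$ for every $t\ge s$, with one inverse per $s$.

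I expect no serious obstacle. The only point requiring care is this last step: one must check that the projection $\Lambda^*_s{\Lambda_s^*}^-$ induced by the chosen inverse coincides exactly with the given UCP projection $P_s$, rather than with some other projection onto $\mathcal S_s$. Only then does the identification with ${W^P_{t,s}}^*$ hold uniformly in $t$.
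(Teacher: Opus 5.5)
Your proposal is correct and follows the paper's proof essentially step by step. UCP-ness and factorization come from $P_s\Lambda^*_s=\Lambda^*_s$, the composition law from $P_u w_{u,s}^*P_s=w_{u,s}^*P_s$, and the $t$-independent generalized inverse ${\Lambda_s^*}^-=G_sP_s$ is built from an arbitrary inner inverse $G_s$. Your explicit check that $\Lambda^*_s{\Lambda_s^*}^-=P_s$ (because $\Lambda^*_sG_s$ is a projection onto $\mathcal S_s=\im P_s$ and so fixes it) is exactly the identity the paper's final chain of equalities uses implicitly, so stating it is a small gain in clarity.
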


\begin{proof}
By Heisenberg CP-divisibility, the reduced propagator {$w_{t,s}^*:\mathcal S_s\longrightarrow\mathcal S_t$ is UCP and obeys the composition law \eqref{eq:comp_law_reduced_props}.}
Since \(P_s\) is a projection onto \(\mathcal S_s\), it acts as the
identity on \(\mathcal S_s\). In particular, $P_s\Lambda^*_s=\Lambda^*_s$. It follows that
\[
{W^P_{t,s}}^*:=w_{t,s}^*P_s
\]
is a composition of UCP maps and is therefore
UCP. Furthermore,
\[
{W^P_{t,s}}^*\Lambda^*_s
 =w_{t,s}^*P_s\Lambda^*_s
 =w_{t,s}^*\Lambda^*_s
 =\Lambda^*_t.
\]
We next verify the composition law. Since $w_{u,s}^*P_s(M_d)\subseteq\mathcal S_u$ 
and \(P_u\) acts as the identity on \(\mathcal S_u\), one has
\[
P_u w_{u,s}^*P_s=w_{u,s}^*P_s.
\]
Consequently,
\begin{align*}
{W^P_{t,u}}^*{W^P_{u,s}}^*
 =w_{t,u}^*P_u w_{u,s}^*P_s 
 =w_{t,u}^*w_{u,s}^*P_s 
 =w_{t,s}^*P_s
 ={W^P_{t,s}}^*.
\end{align*}
It remains to establish the generalized-inverse representation.  Choose any algebraic generalized inverse
\(G_s\) of \(\Lambda^*_s\), i.e. $\Lambda^*_s G_s \Lambda^*_s = \Lambda^*_s$, and define  

\[
{\Lambda^*_s}^{-}:=G_sP_s.
\]
Note, that indeed the above prescription defines a generalized inverse of $\Lambda^*_s$

\[
\Lambda^*_s{\Lambda^*_s}^{-} \Lambda^*_s  =\Lambda^*_sG_sP_s \Lambda^*_s = \Lambda^*_sG_s \Lambda^*_s = \Lambda^*_s ,
\]
since $G_s$ is a generalized inverse of $\Lambda^*_s$. 
Finally, using \(\Lambda^*_t=w_{t,s}^*\Lambda^*_s\), we obtain
\[
\Lambda^*_t{\Lambda^*_s}^{-}
 =w_{t,s}^*\Lambda^*_sG_sP_s
 =w_{t,s}^* \Lambda^*_s G_s \Lambda^*_s {\Lambda^*_s}^- 
 =w_{t,s}^* \Lambda^*_s {\Lambda^*_s}^- 
 ={W^P_{t,s}}^*.
\]
The generalized inverse \({\Lambda^*_s}^{-}=G_sP_s\) depends only on the
initial time \(s\), not on \(t\).
\end{proof}

\begin{example}[Phase covariant dynamics]

Consider a Heisenberg picture $\Lambda_t^*$ phase covariant dynamical map \cite{Smirne-ultimate, Teittinen2018, Filippov-ph-cov} which is  continuous, invertible, and CP-divisible for $t< t_*$. Suppose, instead that for $t\ge t_*$

\begin{equation}\label{eq:phase}
 \Lambda^*_t(I)=I,\quad \Lambda^*_t(\sigma_x)=\Lambda^*_t(\sigma_y)=0,\quad
 \Lambda^*_t(\sigma_z)=b_tI+a_t\sigma_z,
\end{equation}
with real $a_t,b_t$. One easily shows that $\{\Lambda^*_t\}_{t \ge t_*}$ is CP if and only if  and $|b_t|+|a_t|\leq 1$. For the corresponding Schr\"odinger picture one finds (for $t \ge t_*$)

%\end{document}

\[
\Lambda_t(I)=I+b_t\sigma_z,\qquad
\Lambda_t(\sigma_x)=\Lambda_t(\sigma_y)=0,\qquad
\Lambda_t(\sigma_z)=a_t\sigma_z.
\]
Assume first that $a_t>0$ for $t \ge t_*$. Then $\mathcal S_t=\operatorname{span}\{I,\sigma_z\}$ and for $t\geq s {\geq t_*}$

\begin{equation}\label{eq:ab}
 w_{t,s}^*(\sigma_z)=\beta_{t,s}I+\alpha_{t,s}\sigma_z,
 \qquad \alpha_{t,s}=\frac{a_t}{a_s},\qquad
 \beta_{t,s}=\frac{b_t-b_s}{a_s}.
\end{equation}
The positivity test on $(I\pm\sigma_z)/2$ for $w_{t,s}^*$ gives the criterion
\begin{equation}\label{eq:Hphase}
 |b_t-b_s|+a_t\leq a_s\ , \qquad t\geq s\geq t_*.
\end{equation}
Since the domain is abelian, the reduced positive map $w^*_{t,s}$ is CP. Composing {the propagator $w_{t,s}$ with the projector $P=E_z$ on $\operatorname{span}\{I,\sigma_z\}$} produces a UCP propagator on all $\mathsf M_2$.
Such propagator is the extension obtained via the projection, as in Theorem \ref{thm:projection} and not necessarily the only UCP extension.

For comparison, the Schr\"odinger propagator $v_{t,s}$ sends {any Bloch vector $(r_x, r_y, r_z)$} to $(0,0,b_t+a_t r_z)$ %\fs{\sout{. The affine action forced on the $z$ axis by a Schr\"odinger propagator has slope $a_t/a_s$ and displacement $b_t-(a_t/a_s)b_s$. Positivity on the two endpoint states is necessary, and a measure-and-prepare channel realizes it whenever} 
and the propagator is positive if it maps $(0,0,\pm1)$ to positive operators, i.e.
\begin{equation}\label{eq:Sphase}
 \left|b_t-\frac{a_t}{a_s}b_s\right|+\frac{a_t}{a_s}\leq1.
\end{equation}
Equations~\eqref{eq:Hphase} and~\eqref{eq:Sphase} are therefore Heisenberg and Schr\"odinger  CP-divisibility criteria, respectively, {for times $t\geq s\geq t_*$ for which the dynamical map is non-invertible.}

If $a_s=0$, then $\Lambda^*_s(X)=\tr(\rho_sX)I$ for a fixed state $\rho_s {= (I + b_s \sigma_z)/2}$. Any unital propagator fixes scalar matrices, so Heisenberg divisibility forces $\Lambda^*_t=\Lambda^*_s$ for all later times. In this phase-covariant representation this means $a_t=0$ and $b_t=b_s$.

%Appendix~\ref{app:singular-qubits} records the corresponding
%generalized-inverse calculations, including the full Choi conditions
%and the rank-three anisotropic collapse.

\end{example}

\begin{remark}
Let $\mathcal B\subseteq\M$ be a unital $C^*$-subalgebra.  A
conditional expectation onto $\mathcal B$ is a positive norm-one
projection $E_{\mathcal B}:\M\to\mathcal B$.  It satisfies the
bimodule identity
\begin{equation}\label{eq:bimodule}
 E_{\mathcal B}(B_1XB_2)
 =B_1E_{\mathcal B}(X)B_2,
 \qquad B_1,B_2\in\mathcal B.
\end{equation}
Tomiyama's theorem~\cite{Tomiyama} states, conversely, that every
norm-one linear projection from a $C^*$-algebra onto a
$C^*$-subalgebra is positive and has the bimodule
property~\eqref{eq:bimodule}; it is therefore a conditional
expectation.  Conditional expectations are completely positive.  In
particular, a unital positive projection onto $\mathcal B$ has norm
one and hence is a conditional expectation.  For a matrix
subalgebra, the orthogonal projection with respect to the
Hilbert--Schmidt inner product is the normalized-trace-preserving
conditional expectation and is therefore UCP.  This is the canonical
projection that may be used in Theorem~\ref{thm:projection}.  An
arbitrary linear projection onto the same subalgebra need not be
positive or completely positive.

Tomiyama's theorem cannot be invoked when the range of $\Lambda^*_s$ is merely an
operator system and is not closed under the ambient matrix product in $\M$.  This
does not, however, rule out a UCP projection.  The Choi--Effros
theorem~\cite{ChoiEffros} says that if $P:\M\to\M$ is a completely
positive contractive projection, then $\mathcal S=\im P$ becomes a
$C^*$-algebra with its inherited involution and norm but with the new
product
\begin{equation}\label{eq:CEproduct}
 X\odot Y:=P(XY),\qquad X,Y\in\mathcal S.
\end{equation}
In general $X \odot Y\neq XY$.  Hence the range of a UCP projection is
always a $C^*$-algebra in the Choi--Effros sense, but need not be a
$C^*$-subalgebra of the ambient matrix algebra. Recently, the Choi--Effros product was used in the analysis of asymptotic dynamics in the Heisenberg picture \cite{Amato}.
    
\end{remark}

One might expect that Heisenberg CP-divisibility implies the
existence, for every \(s\), of a    generalized inverse \({\Lambda_s^*}^-\) of
\(\Lambda^*_s\) such that $ {W_{t,s}^G}^*:=\Lambda^*_t{\Lambda_s^*}^-$ is completely positive for all \(t\geq s\). This is not true (cf. Appendix \ref{A-1} for a counterexample with $d=3$). However, as we show in the next Section, it is always true for $d=2$. 

%%%%%%%%%%%%%%%%%%%%%%%%%%%%%%%%%%%%%%%%%%%%%%%%%%%%%%%%%%%%%%%%%%%%%%
\subsection{Why qubit dynamics is special:
Schr\"odinger versus Heisenberg divisibility}
\label{sec:qubit-special}

%\fs{\it Subsection as well}

The qubit case plays a special role in the relation between
information-flow conditions and divisibility.  However, the reason is
different in the Schr\"odinger and Heisenberg pictures. 

%\subsection{The Schr\"odinger-picture}

Consider a CPTP dynamical map $\{\Lambda_t\}_{t \ge 0}$. Schr\"odinger CP-divisibility requires CPTP propagators satisfying $\Lambda_t=V_{t,s}\Lambda_s$ for $ t \geq s$. 
For noninvertible maps, the reduced propagator $v_{t,s}$ is determined
only on $\operatorname{Im}\Lambda_s$. Even if $v_{t,s}$ is CPTP the CP extension $V_{t,s}$ might not automatically be trace preserving on the full algebra $\M$ \cite{CRS}. For qubit dynamics it has been proven \cite{chakraborty_information_2019}  that this obstruction disappears.  More
precisely, a qubit dynamical map is Schr\"odinger CP-divisible if and
only if
\begin{equation}
\label{eq:chakraborty_information_2019-information-condition}
 \frac{d}{dt}
 \left\|
   (\operatorname{id}_2\otimes\Lambda_t)
   (p_1\rho_1-p_2\rho_2)
 \right\|_1
 \leq0
\end{equation}
for all probability distributions $p_1+p_2=1$ and all two-qubit
density matrices $\rho_1,\rho_2$. Equivalently,

\[ \frac{d}{dt}
 \left\|
   (\operatorname{id}_2\otimes\Lambda_t)
   (X)
 \right\|_1
 \leq0 ,  \]
for all $X = X^* \in \mathsf M_2 \otimes \mathsf M_2 . $
Two specifically qubit ingredients enter the proof.  First, there is
no CPTP projection onto a three-dimensional subspace of
$\mathsf M_2$ spanned by density matrices \cite{chakraborty_information_2019, braun_universal_2014}.  Hence, at any given time,
the image of a qubit dynamical map satisfying
Eq.~\eqref{eq:chakraborty_information_2019-information-condition} can have dimension only 1, 2, or 4. 
Second, when the image is two-dimensional, two density matrices
suffice to span its affine part, and the Alberti--Uhlmann theorem
provides a CPTP extension of the reduced propagator.  This mechanism
is special to qubits and does not extend directly to arbitrary
dimension.

%\subsection{Why the Heisenberg norm criterion is not qubit-specific}

For the corresponding Heisenberg problem, the analogue of the complete
information-flow characterization holds in every finite dimension.
Indeed, Theorem~\ref{thm:main} states that Heisenberg CP-divisibility
is equivalent to
\begin{equation}
\label{eq:Heisenberg-complete-monotonicity}
 \left\|
   (\operatorname{id}_d\otimes\Lambda^*_t)(X)
 \right\|_\infty
 \leq
 \left\|
   (\operatorname{id}_d\otimes\Lambda^*_s)(X)
 \right\|_\infty
\end{equation}
for every Hermitian
$X\in\mathsf M_d\otimes\mathsf M_d$ and every $t\geq s$, due to the fact that the reduced UCP propagator $ w_{t,s}^*:\mathcal S_s\longrightarrow\mathcal S_t$ can be extended to a UCP map $ W_{t,s}^*:\mathsf M_d\longrightarrow\mathsf M_d$ by means of Arveson's theorem \cite{Arveson}.
No qubit geometry and no analogue of the
Alberti--Uhlmann theorem are required. Consequently, the qubit is not special for the Heisenberg
operator-norm characterization of CP-divisibility.  It is,
nevertheless,  for generalized-inverse propagators.

%fs{\it Move the Lemma right after the first sentence of the Section? The non-existence of CP projection is %used also for Schr\"odinger.}

\begin{lemma}
\label{lem:no-three-dimensional-UCP-projection}
Let $\mathcal S\subset\mathsf M_2$ be a three-dimensional operator
system.  There is no UCP projection
\[
        P:\mathsf M_2\longrightarrow\mathcal S.
\]
\end{lemma}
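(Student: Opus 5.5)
We argue by contradiction, using the Choi--Effros theorem recalled in the Remark above. Suppose $P:\mathsf M_2\to\mathcal S$ is a UCP projection with $\dim\mathcal S=3$. A unital completely positive map is completely contractive, so $P$ is a completely positive contractive projection. By Choi--Effros, $\mathcal S$ equipped with the product $X\odot Y=P(XY)$, the inherited involution and the inherited norm is a unital $C^*$-algebra of dimension $3$. Every finite-dimensional $C^*$-algebra is a direct sum of full matrix algebras, and $3=\sum n_i^2$ forces $n_i=1$. Hence $(\mathcal S,\odot)\cong\mathbb C^3$ is abelian, with three minimal projections $e_1,e_2,e_3$ satisfying $\sum_i e_i=I$, $e_i\odot e_j=\delta_{ij}e_i$ and $e_i=e_i^*$.

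The next step is to transfer this structure back to the ambient order. The Choi--Effros $C^*$-algebra has the same matrix order as $\mathcal S\subset\mathsf M_2$: its positive cone is the set of $X\in\mathcal S$ with $X\geq0$ in $\mathsf M_2$. This holds because a unital linear isometric bijection is an order isomorphism, by Proposition~\ref{lem:order} applied in both directions. In the abelian algebra $\mathbb C^3$, each $e_i$ is positive and $e_1+e_2+e_3=I$. So in $\mathsf M_2$ we obtain three nonzero positive operators $e_i\geq0$ summing to $I$. Moreover, $\mathcal S$ is order-isomorphic, as an ordered vector space, to $\mathbb C^3$ with the coordinatewise order. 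Consequently its Hermitian part $\mathcal S_h$ has a simplicial positive cone with exactly three extreme rays.

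The remaining step, which I expect to be the main obstacle, is to show that a $3$-dimensional real subspace $\mathcal S_h$ of the Hermitian $2\times2$ matrices containing $I$ never has a simplicial cone $\mathcal S_h\cap(\mathsf M_2)_+$. I would parametrize the Hermitian matrices as $xI+\vec r\cdot\vec\sigma$. Then $\mathcal S_h=\{xI+\vec r\cdot\vec\sigma:\vec r\in\Pi\}$ for a two-dimensional plane $\Pi\subset\mathbb R^3$. The positivity condition $x\geq|\vec r|$ shows that the cone is a circular (Lorentz) cone over a disk in $\Pi$. Its base at $x=1$ is a disk, not a triangle, and so has infinitely many extreme points. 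This contradicts the three extreme rays obtained above and proves the lemma.

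As a cross-check, or as an alternative route to the same contradiction: the base disk has extreme points $(I+\vec n\cdot\vec\sigma)$ with $|\vec n|=1$, $\vec n\in\Pi$, and these have rank one. Each of them would have to be a multiple of a minimal projection $e_i$, which is impossible for infinitely many distinct rays.
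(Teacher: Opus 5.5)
Your argument is correct, but it takes a different route from the paper. The paper first puts $\mathcal S$ in the normal form $\operatorname{span}\{I,\sigma_x,\sigma_z\}$, which is the same Bloch-plane reduction you use in your last step. It then argues via the multiplicative domain. Since $\sigma_x,\sigma_z$ are self-adjoint unitaries fixed by $P$, one has $P(\sigma_k^2)=I=P(\sigma_k)^2$. Hence $\sigma_x,\sigma_z\in\operatorname{MD}(P)$, and so $P(\sigma_x\sigma_z)=\sigma_x\sigma_z=-i\sigma_y\notin\mathcal S$.

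You instead use Choi--Effros to obtain $(\mathcal S,\odot)\cong\mathbb C^3$ and transfer the order back to $\mathsf M_2$. This transfer is legitimate because the unit, involution and norm are unchanged; Choi--Effros in fact already gives that the identity is a complete order isomorphism. You then note that the orthant $\mathbb R^3_+$, with three extreme rays, cannot be linearly isomorphic to the circular cone $\{xI+\vec r\cdot\vec\sigma:\vec r\in\Pi,\ x\geq|\vec r|\}$, which has a continuum of them.

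The two routes buy different things.
\begin{itemize}
\item The paper's proof is shorter and needs only the Schwarz inequality. It therefore also excludes unital Schwarz (e.g.\ $2$-positive) projections, as remarked after the lemma.
\item Your proof relies on the heavier Choi--Effros theorem but is more structural. It begins exactly as the paper's proof of Lemma~\ref{lem:three-dimensional-range} for $\mathsf M_3$. It isolates a general necessary condition: a UCP-complemented operator system must be order isomorphic to a finite-dimensional $C^*$-algebra. It also explains why the positive projection deleting $\sigma_y$ can exist: the round cone over a disk is the cone of a three-dimensional spin factor, which is a Jordan algebra but not a $C^*$-algebra.
\end{itemize}

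Your ``cross-check'' is the same extreme-ray count restated, not an independent argument.
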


\begin{proof}
Every 3-dimensional operator system in $\mathsf M_2$ is unitarily equivalent to $\operatorname{span}\{I,\sigma_x,\sigma_z\}.$ 
Suppose that a UCP projection $P$ onto $\mathcal S$ existed.  Since a
projection acts as the identity on its range, one has
\[
        P(\sigma_x)=\sigma_x,
        \qquad
        P(\sigma_z)=\sigma_z.
\]

Let $\mathcal A$ and $\mathcal B$ be unital $C^*$-algebras and let
$\Phi:\mathcal A\to\mathcal B$ be UCP. The multiplicative domain of 
 $\Phi$ is defined by \cite{Choi1974,Paulsen,Stormer_book} (see also recent papers \cite{MD1,MD2})
\[
\operatorname{MD}(\Phi)
=
\left\{
a\in\mathcal A:
\Phi(a^* a)=\Phi(a^*) \Phi(a),\quad
\Phi(aa^*)=\Phi(a)\Phi(a^*) 
\right\}.
\]
 In particular,
$\operatorname{MD}(\Phi)$ is the largest $C^*$-subalgebra of
$\mathcal A$ on which $\Phi$ acts as a $*$-homomorphism.

Now, both $\sigma_x$ and $\sigma_z$ are unitaries {preserved} by $P$ and hence
\[
 P(\sigma_k \sigma_k)
 =
 P(\sigma_k)  P(\sigma_k)
 =
 I,
 \qquad k=x,z.
\]
Hence $\sigma_x, \sigma_z\in\operatorname{MD}(P)$.  It follows that
\[
\begin{aligned}
 P(\sigma_x\sigma_z) =
 P(\sigma_x)P(\sigma_z) =
 \sigma_x\sigma_z
 =
 -i\sigma_y.
\end{aligned}
\]
This is impossible because $\sigma_y\notin\mathcal S$, whereas $\operatorname{Im}P=\mathcal S$.
\end{proof}

Notice that the conclusion is specific to complete positivity  (actually, Schwarz property $\Phi(a^*a) \geq \Phi(a^*)\Phi(a)$ is sufficient). There does exist a positive unital projection onto
$\operatorname{span}\{I,\sigma_x,\sigma_z\}$, namely
\[
 P(I)=I,\qquad
 P(\sigma_x)=\sigma_x,\qquad
 P(\sigma_y)=0,\qquad
 P(\sigma_z)=\sigma_z.
\]
It deletes one component of the Bloch vector and is therefore
positive, but Lemma~\ref{lem:no-three-dimensional-UCP-projection}
shows that it cannot be completely positive.

The absence of three-dimensional UCP projections has a strong
dynamical consequence.

\begin{theorem}
\label{thm:qubit-Heisenberg-ranks}
Let
\[
        \Lambda^*_t:\mathsf M_2\longrightarrow\mathsf M_2,
        \qquad t\geq0,
\]
be a norm-continuous family of UCP maps such that $\Lambda^*_{t=0}=\operatorname{id}$. 
If the family is Heisenberg CP-divisible, then, at every finite time,
\[
        \operatorname{rank}\Lambda^*_t\in\{1,2,4\}.
\]
In particular, a continuous Heisenberg CP-divisible qubit evolution
cannot have a three-dimensional accessible operator system.
\end{theorem}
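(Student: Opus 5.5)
The plan is to show that a three-dimensional accessible operator system, if it occurs, must occur first at some time $\tau$. At that time we then manufacture a UCP projection onto $\mathcal S_\tau$ out of the dynamics itself, which contradicts Lemma~\ref{lem:no-three-dimensional-UCP-projection}.

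\emph{Step 1 (monotonicity of rank and the first singular time).} By Theorem~\ref{thm:main}, Heisenberg CP-divisibility gives the kernel inclusion \eqref{eq:kernel}. Hence $t\mapsto\operatorname{rank}\Lambda^*_t$ is nonincreasing. Suppose, for contradiction, that $\operatorname{rank}\Lambda^*_s=3$ for some $s$, and set $\tau:=\inf\{t:\operatorname{rank}\Lambda^*_t\leq3\}\leq s$. Rank is lower semicontinuous in norm, so $\{t:\operatorname{rank}\Lambda^*_t\le 3\}$ is closed and $\operatorname{rank}\Lambda^*_\tau\leq3$. Since $\Lambda^*_0=\id$ and the family is norm continuous, $\tau>0$. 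The kernel inclusion $\ker\Lambda^*_\tau\subseteq\ker\Lambda^*_s$ gives $\operatorname{rank}\Lambda^*_\tau\geq3$, so $\dim\mathcal S_\tau=3$. For every $u<\tau$ the map $\Lambda^*_u$ is invertible.

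\emph{Step 2 (a UCP projection as a limit of propagators).} For $u<\tau$ the propagator is unique, $W^*_{\tau,u}=\Lambda^*_\tau{\Lambda^*_u}^{-1}$, and it is UCP by assumption. UCP maps on $\mathsf M_2$ form a compact set, so along a sequence $u_n\uparrow\tau$ we may extract a limit $\Phi$, which is UCP. Every $W^*_{\tau,u_n}$ has range in the closed subspace $\mathcal S_\tau$, so $\operatorname{Im}\Phi\subseteq\mathcal S_\tau$. Next, write
\[
W^*_{\tau,u}\Lambda^*_\tau=\Lambda^*_\tau+W^*_{\tau,u}\bigl(\Lambda^*_\tau-\Lambda^*_u\bigr).
\]
Since UCP maps have norm one, the second term is bounded by $\norm{\Lambda^*_\tau-\Lambda^*_u}$, which tends to $0$ by norm continuity. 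Hence $\Phi\Lambda^*_\tau=\Lambda^*_\tau$, i.e.\ $\Phi$ acts as the identity on $\mathcal S_\tau$. Together with $\operatorname{Im}\Phi\subseteq\mathcal S_\tau$, this makes $\Phi$ a UCP projection onto the three-dimensional operator system $\mathcal S_\tau$. This contradicts Lemma~\ref{lem:no-three-dimensional-UCP-projection}, so rank $3$ never occurs, and $\operatorname{rank}\Lambda^*_t\in\{1,2,4\}$ (rank $\geq1$ because $I\in\mathcal S_t$).

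The step I expect to need the most care is Step 2. One must ensure that the limit really is idempotent with range exactly $\mathcal S_\tau$, and this is where both norm continuity and the uniform norm bound on UCP maps enter; without continuity the propagators $\Lambda^*_\tau{\Lambda^*_u}^{-1}$ need not approach a projection. The choice of $\tau$ as the first singular time is essential, because it guarantees invertibility just before $\tau$. For a later singular time the propagators are no longer canonical, and Step 2 would break down.
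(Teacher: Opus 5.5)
Your proposal is correct and follows essentially the same route as the paper: rank monotonicity from the kernel inclusion, the first singular time, a compactness limit of the unique UCP propagators $\Lambda^*_\tau{\Lambda^*_u}^{-1}$ giving a UCP projection onto the three-dimensional $\mathcal S_\tau$, and a contradiction with Lemma~\ref{lem:no-three-dimensional-UCP-projection}. You are slightly more careful than the paper in two places: you show the infimum is attained, via closedness of $\{t:\operatorname{rank}\Lambda^*_t\leq 3\}$, and you justify the limit $\Phi\Lambda^*_\tau=\Lambda^*_\tau$ explicitly through the uniform norm bound on UCP maps.
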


\begin{proof}
Heisenberg divisibility implies
\[
        \ker\Lambda^*_s\subseteq\ker\Lambda^*_t,
        \qquad t\geq s,
\]
and therefore $t\mapsto\operatorname{rank}\Lambda^*_t$ is
non-increasing. Suppose, contrary to the claim, that $\operatorname{rank}\Lambda^*_{t_1}=3$ 
for some finite $t_1$.  Let
\[
 t_*:=
 \inf\bigl\{
   t\in[0,t_1]:
   \operatorname{rank}\Lambda^*_t<4
 \bigr\}
\]
be the first singular time.  Since $\Lambda^*_0=\operatorname{id}$ and the
family is continuous, one has $t_*>0$.  Rank monotonicity and the
assumption $\operatorname{rank}\Lambda^*_{t_1}=3$ imply
\[
        \operatorname{rank}\Lambda^*_{t_*}=3.
\]

For every $s<t_*$, the map $\Lambda^*_s$ is invertible.  Hence the
propagator from $s$ to $t_*$ is uniquely determined:
\[
        W_{t_*,s}^*
        =
        \Lambda^*_{t_*}{\Lambda^*_s}^{-1}.
\]
By Heisenberg CP-divisibility, $W_{t_*,s}^*$ is UCP.  Moreover,
\[
        \operatorname{Im}W_{t_*,s}^*
        \subseteq {\mathcal S_{t_*},\qquad\mathcal S_{t_*}\coloneqq}
        \operatorname{Im}\Lambda^*_{t_*}.
\]
Now, choose a sequence $s_n\uparrow t_*$.  The set of UCP maps on
$\mathsf M_2$ is compact, so, $ W_{t_*,s_n}\longrightarrow P$ for some UCP map $P$.  Since
\[
        W_{t_*,s_n}\Lambda^*_{s_n}=\Lambda^*_{t_*}
\]
and $\Lambda^*_{s_n}\to\Lambda^*_{t_*}$, taking the limit gives $ P\Lambda^*_{t_*}=\Lambda^*_{t_*}$.
Furthermore, one also has
\[
        \operatorname{Im}P\subseteq\mathcal S_{t_*}.
\]
If $Y\in\mathcal S_{t_*}$, then $Y=\Lambda^*_{t_*}(X)$ for some $X {\in\mathsf M_2}$, and
therefore
\[
        P(Y)
        =
        P\Lambda^*_{t_*}(X)
        =
        \Lambda^*_{t_*}(X)
        =
        Y.
\]
Together with
$\operatorname{Im}P\subseteq\mathcal S_{t_*}$, this implies
\[
        \operatorname{Im}P=\mathcal S_{t_*},
        \qquad
        P^2=P.
\]
Hence $P$ is a UCP projection onto the three-dimensional operator
system $\mathcal S_{t_*}$, contradicting
Lemma~\ref{lem:no-three-dimensional-UCP-projection}.
\end{proof}

The preceding theorem implies that the generalized-inverse
construction is particularly well behaved for continuous qubit
dynamics.

\begin{corollary}
\label{cor:qubit-generalized-inverse}
Under the assumptions of
Theorem~\ref{thm:qubit-Heisenberg-ranks}, for every $s\geq0$ there
exists a    generalized inverse ${\Lambda_s^*}^-$ of $\Lambda^*_s$ such that
\[
        {W_{t,s}^G}^*:=\Lambda^*_t{\Lambda_s^*}^-
\]
is UCP for every $t\geq s$.  Moreover,
\[
        {W^G_{t,u}}^*{W^G_{u,s}}^*={W_{t,s}^G}^*,
        \qquad t\geq u\geq s.
\]
\end{corollary}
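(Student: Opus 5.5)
The plan is to reduce the corollary to Theorem~\ref{thm:projection}. By that theorem it suffices to exhibit, for every $s\geq0$, a UCP projection $P_s$ of $\mathsf M_2$ onto $\mathcal S_s=\im\Lambda^*_s$. The theorem then gives ${W^P_{t,s}}^*=w_{t,s}^*P_s=\Lambda^*_t{\Lambda_s^*}^-$ with ${\Lambda_s^*}^-=G_sP_s$, depending only on $s$. This map is UCP and satisfies $W\Lambda^*_s=\Lambda^*_t$. The composition law is then supplied either by Theorem~\ref{thm:projection} directly or by Proposition~\ref{prop:ginv}, since Heisenberg CP-divisibility gives the kernel inclusion~\eqref{eq:kernel}. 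By Theorem~\ref{thm:qubit-Heisenberg-ranks}, $\operatorname{rank}\Lambda^*_s\in\{1,2,4\}$, so I would split into three cases.

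\emph{Rank 4.} Here $\mathcal S_s=\mathsf M_2$, and I take $P_s=\id$. Equivalently, ${\Lambda_s^*}^-={\Lambda^*_s}^{-1}$, and $\Lambda^*_t{\Lambda^*_s}^{-1}$ is the unique propagator, which is UCP by assumption.

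\emph{Rank 1.} Here $\mathcal S_s=\mathbb C I$, and I take $P_s(X)=\tfrac12\tr(X)I$, which is UCP.

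\emph{Rank 2.} This is the only case needing an argument. $\mathcal S_s$ is a two-dimensional operator system containing $I$, so $\mathcal S_s=\operatorname{span}\{I,H\}$ for some Hermitian $H\notin\mathbb C I$. To see this, note that $\mathcal S_s$ is self-adjoint, so its Hermitian part is a two-dimensional real space containing $I$. Diagonalizing $H=U\,\mathrm{diag}(h_1,h_2)\,U^*$ with $h_1\neq h_2$ shows that $\mathcal S_s=U\mathcal D U^*$, where $\mathcal D$ is the diagonal subalgebra. This is a unital $C^*$-subalgebra, because the span of $I$ and a non-scalar Hermitian $2\times2$ matrix is closed under products. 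I then take $P_s=\Ad_U\circ\mathcal E\circ\Ad_{U^*}$, where $\mathcal E$ is the pinching to the diagonal, i.e.\ the trace-preserving conditional expectation of the Remark following the phase-covariant example. This $P_s$ is a UCP projection onto $\mathcal S_s$.

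The main obstacle is not in these cases, since the hard step, excluding rank 3, is Theorem~\ref{thm:qubit-Heisenberg-ranks}, which I may assume. The rank-2 identification is the point to check carefully: every two-dimensional unital self-adjoint subspace of $\mathsf M_2$ must be an abelian subalgebra, and the argument above establishes this.
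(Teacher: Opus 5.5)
Your proposal is correct and follows the paper's proof. You reduce to Theorem~\ref{thm:projection} and exhibit a UCP projection onto $\mathcal S_s$ in each of the rank cases $\{1,2,4\}$ allowed by Theorem~\ref{thm:qubit-Heisenberg-ranks}, and in the rank-2 case your conjugated pinching $\Ad_U\circ\mathcal E\circ\Ad_{U^*}$ is exactly the paper's map $X\mapsto\tfrac12(X+VXV)$, where $V=U\,\mathrm{diag}(1,-1)\,U^*$ is the Hermitian unitary with $\mathcal S_s=\operatorname{span}\{I,V\}$.
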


\begin{proof}
By Theorem~\ref{thm:qubit-Heisenberg-ranks}, one has $ \dim\mathcal S_s\in\{1,2,4\}$. 
Each of these qubit operator systems is the range of a UCP
projection. Indeed, if $\dim\mathcal S_s=4$, then the map $\Phi_s$ is invertible. If $\dim\mathcal S_s=1$, then
$ \mathcal S_s=\mathbb C I$, then for example $P_s(X)= \frac 12 {\rm Tr}X\,I$ is a UCP projection onto $\mathcal S_s$. Finally, suppose that $\dim\mathcal S_s=2$ and $ \mathcal S_s=\operatorname{span}\{I,A\}$
for a non-scalar Hermitian matrix $A$. Note that $\operatorname{span}\{I,A\} = \operatorname{span}\{I,U\}$, where $U$ is a unitary defined by

\[
U:=\frac{2A- ({\rm Tr}\, A)\, I}
         {\lambda_{+}-\lambda_{-}} ,
\]
with $\lambda_+> \lambda_-$ being real eigenvalues of $A$.   One may take $ P_s(X)=\frac12\bigl(X+UXU\bigr)$. 
and hence by Theorem \ref{thm:projection} the propagator ${W^P_{t,s}}^*\Lambda^*_s=\Lambda^*_t$ can be written as $\Lambda^*_t{\Lambda_s^*}^-$ with one generalized inverse ${\Lambda_s^*}^-$ for each initial time $s$.
\end{proof}

Thus, for continuous Heisenberg CP-divisible qubit dynamics starting
from the identity, the conjecture that one can choose completely
positive generalized-inverse propagators is true. This also explains why the continuous counterexample in
Proposition~\ref{prop:continuous-counterexample} requires dimension
three rather than a qubit.

\section{Time-local generators}
\label{sec:tl_gen}

%Consider now time-local master equations for dynamical maps $\{\Lambda_t\}_{t\ge 0}$.  
Assuming that an invertible dynamical map $\{\Lambda_t\}_{t\ge 0}$ is differentiable  it satisfies time-local master equation with left $\mathcal{L}_t$ and right $\mathcal{R}_t$ regular time-local generators \cite{Settimo-SchroHeis}

\begin{equation}
    \dot{\Lambda}_t = \mathcal{L}_t \Lambda_t  \ , \qquad  \dot{\Lambda}_t = \Lambda_t \mathcal{R}_t .
     \label{eq:left-right-ME}
\end{equation}
Taking adjoints in Eq.~\eqref{eq:left-right-ME} yields in the Heisenberg picture
\begin{equation}
 \dot\Lambda^*_t
 =\Lambda^*_t  \mathcal L_t^* \ , \qquad \dot{\Lambda}_t^* 
 = \mathcal R_t^* \Lambda^*_t.
 \label{eq:dual-left-right-generators}
\end{equation}
In particular, the Heisenberg left generator is $\mathcal R_t^*$, not $\mathcal L_t^*$. 
%If the dynamical map $\{\Lambda_t\}_{t\ge 0}$ 
Note that given the map $\{\Lambda_t\}_{t \ge 0}$   the left and right generators are uniquely defined via \cite{Settimo-SchroHeis}
\begin{equation}
 \mathcal L_t:=\dot\Lambda_t \Lambda_t^{-1},
 \qquad
 \mathcal R_t:=\Lambda_t^{-1} \dot\Lambda_t ,
 \label{eq:left-right-generators}
\end{equation}
and they are related by the similarity transformation

\begin{equation}
% \dot\Lambda_t= \mathcal L_t \Lambda_t
% =\Lambda_t   \mathcal R_t,
% \qquad
 \mathcal R_t=\Lambda_t^{-1}  \mathcal L_t \Lambda_t.
 \label{eq:left-right-master-equations}
\end{equation}
Accordingly, one finds for the corresponding propagators
\begin{equation}
 V_{t,s}
 =\mathcal T_{\leftarrow}
   \exp\!\left(\int_s^t \mathcal L_\tau\,d\tau\right),
 \qquad
 W_{t,s}
 =\mathcal T_{\rightarrow}
   \exp\!\left(\int_s^t \mathcal R_\tau\,d\tau\right),
 \label{eq:left-right-ordered-exponentials}
\end{equation}
where $\mathcal T_{\leftarrow}$ and $\mathcal T_{\rightarrow}$ denote
chronological and antichronological ordering, respectively.

%\begin{equation}
%\mathcal K_t^{\mathrm L}:=\dot\Lambda^*_t {\Lambda^*_t}^{-1}= \mathcal R_t^*,
% \qquad
%\mathcal K_t^{\mathrm R}:={\Lambda^*_t}^{-1} \dot\Lambda^*_t= \mathcal L_t^*.
% \label{eq:Heis-left-right-generators}
%\end{equation}

Under the usual regularity assumptions for time-local master equations,
Schr\"odinger CP-divisibility is equivalent to $\mathcal L_t$ having Gorini-Kossakowski-Lindblad-Sudarshan (GKLS) form with nonnegative rates $\gamma_{i,t}$ for almost every $t$ \cite{GKS, L}

\begin{equation}
    \mathcal{L}_t(\rho) = -i[H_t,\rho] + \sum_i \gamma_{i,t} \Big( L_{i,t} \rho L_{i,t}^* - \frac 12 \{ L_{i,t}^* L_{i,t},\rho\} \Big) .
\end{equation}
Heisenberg CP-divisibility
is instead equivalent to $\mathcal R_t$ having GKLS form with nonnegative rates,
or, equivalently, to $\mathcal R_t^*$ generating UCP intermediate maps.  The
similarity relation in Eq.~\eqref{eq:left-right-master-equations} does
not preserve the cone of GKLS generators.  This is the generator-level
origin of the inequivalence established in Ref.~\cite{Settimo-SchroHeis}.

Consider now $\{\Lambda_t\}_{t \ge 0}$ which might be non-invertible. 

\begin{proposition}
%If $\Lambda_t$ is right-differentiable but in general  not invertible, then 
There exists a time-local generator $\mathcal{L}_t: \M \to \M$ such that $ \dot{\Lambda}_t = \mathcal{L}_t \Lambda_t$  if and only if 

\begin{equation}\label{eq:derivativekernel}
 \ker\Lambda_t\subseteq\ker\dot\Lambda_t.
\end{equation}
    
\end{proposition}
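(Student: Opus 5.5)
This is the same linear-algebra factorization lemma used in Proposition~\ref{prop:kernel}, now applied at a single time $t$ to the pair $(\Lambda_t,\dot\Lambda_t)$ in place of $(\Lambda^*_s,\Lambda^*_t)$. Here $\dot\Lambda_t$ denotes the (right) derivative at the fixed time $t$, assumed to exist.

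\emph{Necessity.} Suppose $\dot\Lambda_t=\mathcal L_t\Lambda_t$ for some linear $\mathcal L_t$. If $\Lambda_t(X)=0$, then $\dot\Lambda_t(X)=\mathcal L_t(\Lambda_t(X))=0$. This gives \eqref{eq:derivativekernel}.

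\emph{Sufficiency.} Assume $\ker\Lambda_t\subseteq\ker\dot\Lambda_t$ and define a reduced generator on the accessible range,
\[
\ell_t:\im\Lambda_t\to\M,\qquad \ell_t(\Lambda_t(X)):=\dot\Lambda_t(X).
\]
It is well defined: if $\Lambda_t(X)=\Lambda_t(X')$, then $X-X'\in\ker\Lambda_t\subseteq\ker\dot\Lambda_t$. It is clearly linear.

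To extend $\ell_t$ to all of $\M$, I would mirror Proposition~\ref{prop:ginv}. Choose any generalized inverse $\Lambda_t^-$ with $\Lambda_t\Lambda_t^-\Lambda_t=\Lambda_t$, and set
\[
\mathcal L_t:=\dot\Lambda_t\Lambda_t^- .
\]
By \eqref{eq:inner}, $(\id-\Lambda_t^-\Lambda_t)(\M)\subseteq\ker\Lambda_t\subseteq\ker\dot\Lambda_t$. Hence
\[
\mathcal L_t\Lambda_t=\dot\Lambda_t\Lambda_t^-\Lambda_t=\dot\Lambda_t .
\]
Equivalently, $\mathcal L_t=\ell_t P_t$ with $P_t=\Lambda_t\Lambda_t^-$ a projection onto $\im\Lambda_t$.

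Following Proposition~\ref{prop:all}, I would also note that the general solution is
\[
\mathcal L_t=\dot\Lambda_t\Lambda_t^-+\Psi(\id-P_t),
\]
with $\Psi$ an arbitrary linear map. This is the gauge freedom on the inaccessible directions.

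The proof involves no real obstacle; it is purely algebraic. The one point needing care is that both the derivative and the kernel condition are taken at the same fixed time $t$, so no continuity of $\mathcal L_t$ in $t$ is claimed. If Hermiticity preservation is desired, I would choose a Hermiticity-preserving $\Lambda_t^-$, for instance the Moore--Penrose inverse with respect to the Hilbert--Schmidt product; since $\dot\Lambda_t$ is Hermiticity preserving, $\mathcal L_t$ then is too.
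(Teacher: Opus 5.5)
Your proof is correct and follows the paper's own argument. Necessity is immediate. For sufficiency you pick a generalized inverse $\Lambda_t^-$, use $(\id-\Lambda_t^-\Lambda_t)(\M)\subseteq\ker\Lambda_t\subseteq\ker\dot\Lambda_t$ to get $\mathcal L_t=\dot\Lambda_t\Lambda_t^-$, and obtain the general solution by adding the gauge term vanishing on $\im\Lambda_t$. Your reduced map $\ell_t$ is exactly the gauge-invariant generator the paper introduces after the proof. Your remark on Hermiticity preservation via the Moore--Penrose inverse is a valid addition.
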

\begin{proof} Necessity is immediate. Now, let $\Lambda_t^-$ be a generalized inverse of $\Lambda_t$. Since $\Lambda_t\bigl(\operatorname{id}-\Lambda^-_t\Lambda_t\bigr)=0$,  assuming (\ref{eq:derivativekernel}) one finds for any $X \in \M$

\[
\bigl(\operatorname{id}-\Lambda^-_t\Lambda_t\bigr)(X)
\subseteq\ker\Phi_t
\subseteq\ker\dot\Phi_t,
\]
and hence $\dot{\Lambda}_t \bigl(\operatorname{id}-\Lambda^-_t\Lambda_t\bigr) =0$, that is, $\dot{\Lambda}_t  = (\dot{\Lambda}_t \Lambda_t^-) \Lambda_t$. 
Consequently, all solutions of $ \dot{\Lambda}_t = \mathcal{L}_t \Lambda_t$  are given by
\[
\mathcal L_t 
=
\dot\Lambda_t \Lambda^-_t
+
Z_t\bigl(\operatorname{id}-\Lambda_t \Lambda^-_t\bigr),
\]
where $Z_t: \M\to \M$ is arbitrary.
\end{proof}
Similarly, one finds for the left Heisenberg generator  $\mathcal{R}^*_t$ if and only if

\[ \ker\Lambda^*_t\subseteq\ker\dot\Lambda_t^* , \]
which has the following form 

\begin{equation}\label{eq:generator}
 \mathcal R^*_t=\dot\Lambda^*_t {\Lambda^*_t}^- + Y_t^*(\id-\Lambda^*_t {\Lambda^*_t}^-),\qquad
 \dot\Lambda^*_t=\mathcal R^*_t\Lambda^*_t,
\end{equation}
with arbitrary $Y_t : \M \to \M$. This is the generator counterpart of~\eqref{eq:all}. Its action on $\mathcal S_t$ is fixed, while its action on a complement is not. It satisfies $\mathcal R^*_t(I)=0$. One finds the following relations between $\mathcal{L}_t$ and $\mathcal{R}_t$:

\begin{equation}
    \mathcal L_t 
=
\Lambda_t \mathcal{R}_t \Lambda^-_t
+
Z_t\bigl(\operatorname{id}-\Lambda_t \Lambda^-_t\bigr), \qquad 
 \mathcal R_t= \Lambda^-_t \mathcal{ L}_t \Lambda_t + (\id-\Lambda^-_t {\Lambda_t})Y_t ,
\end{equation}
which provide the generalization of the similarity relation (\ref{eq:left-right-master-equations}).

\begin{example} Consider a qubit CPTP family which is for $t \in [0,\frac 43]$ is defined as follows

\[  \Lambda_t(\rho) = (1-t) \rho + \frac t2 I\, {\rm Tr}\rho . \]
At $t=1$ one has $\Lambda_{t=1}(\rho) =  \frac 12 I\, {\rm Tr}\rho$ which has a 3-dimensional kernel spanned by traceless matrices $\ker \Lambda_{t=1} = {\rm span}\{\sigma_x,\sigma_y,\sigma_z\}$. Now $\dot{\Lambda}_t(\rho) =  \frac 12 I\, {\rm Tr}\rho- \rho$, and hence $\ker \dot{\Lambda}_{t=1} = cI$.  The kernel condition $\ker\Lambda_t\subseteq\ker\dot\Lambda_t$ is violated  and hence there is no time-local generator at $t=1$.
{Notice that $\Lambda^*_t = \Lambda_t$ and therefore the condition $\ker\Lambda_t^*\subseteq\ker\dot\Lambda_t^*$ is violated also in the Heisenberg picture, and therefore also the time-local generator $\mathcal R^*_t$ does not exist.}
\end{example}

The next example illustrates a scenario where $\Lambda_t^* = \Lambda_t$ is not invertible but Schr\"odinger and Heisenberg CP-divisible and the generator $\mathcal{L}_t = \mathcal{R}_t$ does exist for all $t \geq 0$. 

\begin{example}[Non-invertible CP-divisible evolution and its generator]
%\fs{\it The only simplification might be to not explicitly specify $f(t)$, but any monotonic, positive, continuous and differentiable function with $f(0) = 1$, $\dot f(1) = 0$, $f(t) = 0$ for $t\ge1$ works. Divergence of the rate is guaranteed by $\int_0^1\kappa(t) dt = \log f(1)-\log f(0) = +\infty$. But it doesn't seem to be a huge simplification, if any.}

Consider the qubit dephasing channels
\[
\mathcal D_q(X)
=
\frac{1+q}{2}X
+\frac{1-q}{2}\sigma_zX\sigma_z,
\qquad 0\leq q\leq 1.
\]
They satisfy $\mathcal D_q\mathcal D_r=\mathcal D_{qr}$ and $\mathcal D_q^*=\mathcal D_q$.
Let
\[
f(t)=
\begin{cases}
(1-t)^2, & 0\leq t<1,\\
0, & t\geq 1,
\end{cases}
\]
and define a dynamical map $\Lambda_t=\mathcal D_{f(t)}$.
The function $f$ is continuously differentiable, and hence locally
absolutely continuous.  In the Pauli basis one has
\[
\Lambda_t(I)=I,
\qquad
\Lambda_t(\sigma_z)=\sigma_z,
\qquad
\Lambda_t(\sigma_x)=f(t)\sigma_x,
\qquad
\Lambda_t(\sigma_y)=f(t)\sigma_y.
\]
Consequently,
\[
\operatorname{rank}\Lambda_t=
\begin{cases}
4, & t<1,\\
2, & t\geq 1.
\end{cases}
\]
For $t\geq1$ one has 
\[
\Lambda_t=\mathcal E_z,
\qquad
\mathcal E_z(X)=\frac12\bigl(X+\sigma_zX\sigma_z\bigr),
\]
where the conditional expectation $\mathcal E_z$ is the complete dephasing channel. The evolution is CP-divisible.  Indeed, define
\[
V_{t,s}=
\begin{cases}
\displaystyle
\mathcal D_{f(t)/f(s)}, & 0\leq s\leq t<1,\\[2mm]
\mathcal D_0=\mathcal E_z, & 0\leq s<1\leq t,\\
\mathrm{id}, & 1\leq s\leq t.
\end{cases}
\]
%{alternative:
%\[
%V_{t,s}=
%\begin{cases}
%\displaystyle
%\mathcal D_{f(t)/f(s)}, & t<1,\\[2mm]
%\mathcal D_0=\mathcal E_z, & t \geq 1,\\
%\end{cases}
%\]
%and $0\leq s\leq t$.
%}
Every $V_{t,s}$ is CPTP and $\Lambda_t=V_{t,s}\Lambda_s$. Moreover, the identity $\mathcal D_q\mathcal D_r=\mathcal D_{qr}$ implies the composition rule $V_{t,u}V_{u,s}=V_{t,s}$ for $t\geq u\geq s$.
Since {$\Lambda_t^*=\Lambda_t$} is self-adjoint, one has $W_{t,s}=V_{t,s}$, and the evolution is CP-divisible in both the Schr\"odinger and the Heisenberg pictures.

To analyze the time-local generator, {observe that $\Lambda_t$ can be written as}
%introduce $\mathcal Q:=\mathrm{id}-\mathcal E_z$. Then
\[\Lambda_t=\mathcal E_z+f(t){(\operatorname{id} - \mathcal E_z).}\]
For $t<1$, it is invertible, with
\[
\Lambda_t^{-1}
=
\mathcal E_z+\frac{1}{f(t)} {(\operatorname{id} - \mathcal E_z).}
\]
{Furthermore,} its left and right generators coincide:
%\[
%\mathcal L_t
%=
%\dot\Lambda_t\Lambda_t^{-1}
%=
%\frac{\dot f(t)}{f(t)}\mathcal Q
%=
%\Lambda_t^{-1}\dot\Lambda_t
%=
%\mathcal R_t.
%\]
%Since
%\[\frac{\dot f(t)}{f(t)}=-\frac{1}{1-t},\]
%\fs{Therefore,} one obtains for $t < 1$
\[
{
\mathcal L_t(X)=\mathcal R_t(X)
=
\kappa(t) \bigl(\sigma_zX\sigma_z-X\bigr),
\qquad \kappa(t)={\frac{\dot f(t)}{f(t)}=}\frac{1}{1-t}\geq0.
}
\]
This is a GKLS generator.  The rate $\kappa(t)$ is not integrable up to the rank-loss time:
\[
\int_0^1\kappa(t)\,\mathrm dt
=
\int_0^1\frac{\mathrm dt}{1-t}\,
=
+\infty.
\]
Thus the generator is of GKLS form for every $t<1$, but it is not
integrable on an interval containing $t=1$.  This divergence is
precisely what allows the dynamical map to lose invertibility at a
finite time $t=1$. At and after the singular time,
\[
\dot\Lambda_t=0,
\qquad t\geq1.
\]
Since $\mathcal E_z$ is a projection and hence an inner generalized
inverse of itself, all solutions of the left-generator equation $\dot\Lambda_t=\mathcal L_t\Lambda_t$
are of the form
\[
\mathcal L_t
=
Z_t\bigl(\mathrm{id}-\mathcal E_z\bigr),
\qquad t\geq1,
\]
where $Z_t$ is arbitrary.  Similarly, all solutions of the
right-generator equation $\dot\Lambda_t=\Lambda_t\mathcal R_t$ are
\[
\mathcal R_t
=
\bigl(\mathrm{id}-\mathcal E_z\bigr)Y_t,
\qquad t\geq1.
\]
the simplest choice for $t \geq 1$ is $\mathcal L_t=\mathcal R_t=0$. 
There are, however, infinitely many nonzero GKLS continuations.  For
example,
\[
\mathcal L_t(X)
=
c(t)\bigl(\sigma_zX\sigma_z-X\bigr),
\qquad c(t)\geq0,
\]
is admissible because it vanishes on $\operatorname{Im}\mathcal E_z
=
\operatorname{span}\{I,\sigma_z\}$. 
This example therefore illustrates that CP-divisibility remains well
defined through a finite-time loss of invertibility.  Before the
singular time the generator is unique and necessarily diverges,
whereas at and after the singular time it is highly nonunique because
the evolution fixes its action only on the accessible range.

\end{example}

The arbitrary maps $Z_t$ and $Y_t$ describe a natural gauge freedom of the time-local generator associated with the non-invertibility of $\Lambda_t$. Indeed, the term $Z_t(\mathrm{id}-\Lambda_t\Lambda^-_t)$ vanishes on $\operatorname{Im}\Lambda_t$ and therefore does not affect $\mathcal L_t\Lambda_t$, whereas $(\mathrm{id}-\Lambda^-_t\Lambda_t)Y_t$ takes values in $\ker\Lambda_t$ and hence does not affect $\Lambda_t\mathcal R_t$. Consequently, different choices of $Z_t$ and $Y_t$ define different extensions of the generator to dynamically inaccessible directions while producing exactly the same evolution $\dot{\Lambda}_t$. This freedom disappears when $\Lambda_t$ is invertible. It should, however, be understood as an algebraic gauge freedom: additional requirements such as Hermiticity preservation, trace preservation, unitality, regularity, or GKLS form may substantially restrict the admissible choices of $Z_t$ and $Y_t$.

To move to the gauge-invariant formulation, assume $\ker\Lambda_t\subseteq\ker\dot\Lambda_t$ and define

\[
  \widehat{\Lambda}_t:
  \mathrm{M}_d/\ker\Lambda_t\longrightarrow\operatorname{Im}\Lambda_t,
  \qquad
  \widehat{\Lambda}_t([X])=\Lambda_t(X),
\]
together with the induced map
\[
  \widehat{\dot\Lambda}_t([X])=\dot\Lambda_t(X).
\]
Then the gauge-invariant left generator at the fixed time $t$ is
\[
  \ell_t
  =\widehat{\dot\Lambda}_t\,\widehat{\Lambda}_t^{-1}
  :\operatorname{Im}\Lambda_t\longrightarrow\mathrm{M}_d.
\]
The analogous pointwise construction for $\Lambda_t^*$ gives
\[
  r_t^*
  =\widehat{\dot\Lambda_t^*}\,
   \widehat{\Lambda_t^*}^{-1}
  :\operatorname{Im}\Lambda_t^*\longrightarrow\mathrm{M}_d,
\]
If the kernels are constant on a time interval (so that the quotient spaces
can be identified with one fixed space), then one may write
\[
  \frac{d}{dt}\widehat{\Lambda}_t=\ell_t\widehat{\Lambda}_t,
  \qquad
  \frac{d}{dt}\widehat{\Lambda_t^*}
  =r_t^*\widehat{\Lambda_t^*}.
\]
Across a rank-changing time these differential equations require a one-sided
or almost-everywhere interpretation; without a chosen identification of the
varying quotient spaces, $d\widehat\Lambda_t/dt$ is only formal.

\section{Conclusions}
\label{sec:conclusions}

We have developed a formulation of Heisenberg divisibility that remains valid
when the dynamical maps cease to be invertible.  The basic object is then not a
unique map on the full matrix algebra, but the unique reduced propagator on the
accessible operator system $\mathcal{S}_s=\operatorname{Im}\Lambda_s^*$.
Heisenberg divisibility is governed by the inclusion of the kernels of
the adjoint maps, while complete-positive divisibility is equivalent to
complete operator-norm monotonicity (\ref{eq:complete-norm}).  This gives a
finite-time characterization that does not require differentiability and
continues through changes of rank.

The distinction between a reduced propagator $w^*_{t,s}$ and its extension $W^*_{t,s}$ is central.
Arveson's theorem guarantees a UCP extension of each reduced UCP propagator,
but it does guarantee that $W^*_{t,s}$ satisfy local composition law.  
Generalized inverses solve this  problem: they produce unital propagators $\Lambda_t^* \Lambda^{*-}_s$ with an exact composition law.  They need not,
however, be completely positive.
A UCP projection onto each accessible operator system $\mathcal{S}_s$  reconciles the two
requirements and yields a composable UCP family.  When  $\mathcal{S}_s$  is a
$C^*$-subalgebra, this construction is supplied by a conditional expectation
and Tomiyama's theorem.  For a general operator-system range, the appropriate
intrinsic algebraic structure is instead the Choi--Effros product. It should be stressed that in the Schr\"odinger picture it was proved  \cite{CRS} that CP-divisibility guarantees the existence of reduced CPTP propagator $v_{t,s}$ on $\im \Lambda_s$ which can extended to CP maps $V_{t,s}$ which in general are not trace-preserving.  

The qutrit counterexample shows that the projection requirement is genuinely
stronger than Heisenberg CP-divisibility: even a continuous, locally absolutely
continuous, CP-divisible evolution starting from the identity can reach an
accessible operator system  $\mathcal{S}_s$  for which no completely positive
generalized-inverse propagator exists.  In contrast, continuous qubit
dynamics cannot pass from full rank to a three-dimensional accessible operator
system under Heisenberg CP-divisibility.  The remaining possible ranges have
dimensions $1$, $2$, or $4$ and admit UCP projections.  This explains why the
generalized-inverse conjecture is valid for qubits but fails already in
dimension three.

At the differential level, loss of invertibility replaces a unique time-local
generator by an affine family of generators.  Their common restriction to the
accessible range  $\mathcal{S}_s$  is fixed by the dynamics, whereas their action on inaccessible
directions is a gauge freedom.   The quotient
and range formulation isolates the gauge-invariant part of this description.

Several questions remain open.  It would be useful to characterize precisely
which higher-dimensional accessible operator systems are UCP-complemented,
and to find necessary and sufficient conditions for the existence of a global
UCP cocycle when no such projections are available.  A second direction is to
relate these two-time results to operational tasks (cf. \cite{Federico_Tasks})
and to genuinely multi-time notions of quantum Markovianity (cf. \cite{Pollock2018}).

\section*{Acknowledgements}
D.C. was supported by the Polish National Science
Centre project No. 2024/55/B/ST2/01781.
F.S. acknowledges support from Magnus Ehrnroothin S\"a\"ati\"o.

\appendix

\section*{Appendix}

\numberwithin{equation}{section}

\section{CP-divisibility does not imply a completely positive
generalized-inverse propagator}   \label{A-1}

{In this section, we show that there exist Heisenberg CP-divisible dynamics on $\mathsf M_3$ for which $\Lambda_*{\Lambda_s^*}^-$ is not CP for any generalized inverses ${\Lambda_s^*}^-$.
We prove this in Proposition~\ref{prop:continuous-counterexample}, whose proof relies on the following Lemma}

\begin{lemma}\label{lem:three-dimensional-range}
Let $\mathcal S\subset\mathsf M_3$ be a three-dimensional operator system.
If there exists a UCP projection
\[
        P:\mathsf M_3\longrightarrow\mathcal S,
\]
then $\mathcal S$ is a commutative $C^*$-subalgebra of $\mathsf M_3$.
\end{lemma}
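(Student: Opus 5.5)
The plan is to use the Choi--Effros theorem recalled in the Remark following the phase-covariant example, and then to pin down the resulting algebra inside $\mathsf M_3$ by a norm and trace count. Since $P$ is a UCP (hence completely positive contractive) projection with $\operatorname{Im}P=\mathcal S$, the space $\mathcal S$ with product $X\odot Y=P(XY)$, the involution inherited from $\mathsf M_3$, and the operator norm of $\mathsf M_3$ is a $C^*$-algebra. Its unit is $I$, because $P(IX)=P(X)=X$ for $X\in\mathcal S$.

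\textbf{Step 1: identify the abstract algebra.} A finite-dimensional $C^*$-algebra is a direct sum $\bigoplus_k \mathsf M_{n_k}$ with $\sum_k n_k^2=\dim$. Since $\sum_k n_k^2=3$ forces $n_1=n_2=n_3=1$, we get $(\mathcal S,\odot)\cong\mathbb C^3$. Hence there are nonzero elements $e_1,e_2,e_3\in\mathcal S$ with
\[
e_i=e_i^*=e_i\odot e_i,\qquad e_i\odot e_j=0\ (i\neq j),\qquad e_1+e_2+e_3=I,
\]
and these $e_i$ span $\mathcal S$.

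\textbf{Step 2: transfer spectral information to $\mathsf M_3$.} The norm on $(\mathcal S,\odot)$ is the operator norm of $\mathsf M_3$, so $C^*$-algebra norm facts become matrix statements.
\begin{itemize}
\item A nonzero projection has norm one, so $\norm{e_i}_\infty=1$.
\item $u_i:=I-2e_i$ is a self-adjoint unitary of the $C^*$-algebra, so $\norm{I-2e_i}_\infty=1$.
\end{itemize}
As $e_i$ is a Hermitian matrix, the second fact gives $-I\le I-2e_i\le I$, that is $0\le e_i\le I$ in $\mathsf M_3$. Together with the first fact, each $e_i$ is a positive matrix having $1$ as an eigenvalue. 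This avoids needing the statement that the Choi--Effros order coincides with the matrix order, although that would also suffice.

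\textbf{Step 3: trace count.} Each $e_i\ge0$ has largest eigenvalue $1$, so $\tr e_i\ge1$. On the other hand $\sum_i\tr e_i=\tr I=3$, so $\tr e_i=1$ for every $i$. A positive matrix with eigenvalue $1$ and trace $1$ has spectrum $\{1,0,0\}$, so each $e_i$ is a rank-one orthogonal projection $|\psi_i\rangle\langle\psi_i|$. Three rank-one projections summing to $I$ on $\mathbb C^3$ have mutually orthogonal ranges. To see this, take $\langle\psi_1|\cdot|\psi_1\rangle$ of $\sum_i e_i=I$, which gives $1+|\langle\psi_1|\psi_2\rangle|^2+|\langle\psi_1|\psi_3\rangle|^2=1$. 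Therefore $\mathcal S=\operatorname{span}\{e_1,e_2,e_3\}$ is the algebra of matrices diagonal in the orthonormal basis $\{\psi_i\}$, a commutative $C^*$-subalgebra of $\mathsf M_3$. Consistently, $\odot$ then coincides with the ordinary product.

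\textbf{Main obstacle.} The delicate point is Step 2: extracting genuine matrix-level positivity and eigenvalue information from the abstract Choi--Effros structure. It relies only on the norm being inherited. If that step had a gap, my fallback would be the multiplicative-domain argument from Lemma~\ref{lem:no-three-dimensional-UCP-projection}, applied to the unitary $u=e_1+\omega e_2+\omega^2e_3$ with $\omega=e^{2\pi i/3}$.
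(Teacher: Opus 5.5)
Your proof is correct, and the first half is the paper's: Choi--Effros, then $\dim\mathcal S=3$ forces $(\mathcal S,\odot)\cong\mathbb C^3$. The two arguments diverge in how the abstract minimal projections are turned into matrix information.

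The paper uses the \emph{order} content of Choi--Effros. The isomorphism $\psi:\mathbb C^3\to\mathcal S$ is a unital complete order isomorphism, so $B_i=\psi(e_i)\geq0$. It then extends the coordinate states of $\psi^{-1}$ to states $\omega_i$ on $\mathsf M_3$, which requires an extension theorem. Support arguments for the corresponding density matrices then produce orthonormal vectors $v_i$ with $B_jv_i=\delta_{ij}v_i$.

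You use only the \emph{isometric} content, namely that the $C^*$-norm is the inherited operator norm. The $C^*$-identity applied to the projection $e_i$ and to the symmetry $I-2e_i$ gives $0\leq e_i\leq I$ with top eigenvalue $1$, with no state extension and no need to identify the Choi--Effros order with the matrix order. You then close with the trace count. This makes your route more elementary and self-contained.

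Your unit eigenvector $\psi_i$ plays exactly the role of the paper's $v_i\in\operatorname{supp}\rho_i$. From $\sum_j\langle\psi_i|e_j|\psi_i\rangle=1$ and $e_j\geq0$ you already get $e_j\psi_i=\delta_{ij}\psi_i$, so the trace count could even be skipped. Either way, both proofs use the same dimensional coincidence at the end: three minimal projections in a three-dimensional Hilbert space.

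Your multiplicative-domain fallback is not needed. As stated it would also not be immediate, because $u=e_1+\omega e_2+\omega^2e_3$ is unitary for $\odot$ but is not a priori unitary for the matrix product.
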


\begin{proof}
By the Choi--Effros theorem~\cite{ChoiEffros}, the range of $P$, equipped with the
product
\[
        X \odot  Y:=P(XY)
\]
{as in Eq.~\eqref{eq:CEproduct},}
is a finite-dimensional \(C^*\)-algebra whose matrix order is the
matrix order inherited from \(M_3\). Since $\dim_{\mathbb C}\mathcal S=3$,
this \(C^*\)-algebra is isomorphic to \(\mathbb C^3\). Hence there is a
unital complete order isomorphism
\[
        \psi:\mathbb C^3\longrightarrow\mathcal S.
\]

Let \(e_1,e_2,e_3\) be the canonical minimal projections in
\(\mathbb C^3\), and let
\[
        B_i=\psi(e_i).
\]
Then
\[
        B_i\geq0,
        \qquad
        \sum_{i=1}^3B_i=I.
\]
The coordinate functionals of \(\psi^{-1}\) are states on
\(\mathcal S\). Extending them to states \(\omega_i\) on \(\mathsf M_3\), we
obtain
\[
        \omega_i(B_j)=\delta_{ij}.
\]
Let \(\rho_i\) be a density matrix representing \(\omega_i\). Since
\(0\leq B_i\leq I\),
\[
        \operatorname{Tr}(\rho_iB_i)=1
\]
implies that \(B_i\) acts as the identity on
\(\operatorname{supp}\rho_i\). Similarly,
\[
        \operatorname{Tr}(\rho_iB_j)=0,
        \qquad j\neq i,
\]
implies that \(B_j\) vanishes on \(\operatorname{supp}\rho_i\).

Choose a unit vector \(v_i\in\operatorname{supp}\rho_i\). Then
\[
        B_jv_i=\delta_{ij}v_i.
\]
The vectors \(v_1,v_2,v_3\) are mutually orthogonal. Indeed, for
\(i\neq j\),
\[
        \langle v_j,v_i\rangle
        =
        \langle v_j,B_iv_i\rangle
        =
        \langle B_iv_j,v_i\rangle
        =0.
\]
They therefore form an orthonormal basis of \(\mathbb C^3\), and
\[
        B_i=|v_i\rangle\langle v_i|.
\]
Consequently, the \(B_i\) commute and their linear span is a
commutative \(C^*\)-subalgebra of \(\mathsf M_3\).
\end{proof}

\begin{proposition}\label{prop:continuous-counterexample}
There exists a continuous, locally absolutely continuous,
Heisenberg CP-divisible family
\[
        \Lambda^*_t:\mathsf M_3\longrightarrow\mathsf M_3,
        \qquad
        \Lambda^*_0=\operatorname{id},
\]
and strict times \(t>s\) for which no    generalized inverse \({\Lambda_s^*}^-\)
of \(\Lambda^*_s\) makes $\Lambda^*_t{\Lambda_s^*}^-$ completely positive.
\end{proposition}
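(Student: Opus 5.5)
The idea is to design the dynamics so that, after some time $s$, it freezes at a map whose range $\mathcal S_s$ is a three-dimensional \emph{noncommutative} operator system. For $t>s$ we then have $\Lambda^*_t=\Lambda^*_s$. Every generalized-inverse propagator $\Lambda^*_t{\Lambda^*_s}^-=\Lambda^*_s{\Lambda^*_s}^-=P_s$ is then a projection onto $\mathcal S_s$, and it is unital by Proposition~\ref{prop:ginv}. If it were completely positive, it would be a UCP projection onto $\mathcal S_s$. Lemma~\ref{lem:three-dimensional-range} would then force $\mathcal S_s$ to be a commutative $C^*$-subalgebra, which is a contradiction.

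The proof of Theorem~\ref{thm:qubit-Heisenberg-ranks} shows a constraint on the construction. If the rank dropped directly from $9$ to $3$ at the first singular time, a limit of propagators would produce a UCP projection onto that range. So the first drop must land on a complemented range. Only afterwards can a rank-preserving UCP evolution move this range to a noncommutative one. The construction proceeds in three stages.

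\emph{Stage 1} ($0\le t\le 1$). Set $\Lambda^*_t=(1-t)\id+tE$, where $E$ is the trace-preserving conditional expectation onto the diagonal algebra $\mathcal D\cong\mathbb C^3$. For $s\le t<1$ the propagator is $\lambda\,\id+(1-\lambda)E$ with $\lambda=(1-t)/(1-s)\in[0,1]$, which is UCP. At $t=1$ the range is $\mathcal D$, which is complemented by $E$.

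\emph{Stage 2} ($1\le t\le 2$). Set $\Lambda^*_t=\Psi_{t-1}E$. Here $\Psi_\tau=e^{\tau\mathcal K}$ with $\mathcal K(X)=VXV^*-X$ for a fixed non-diagonal unitary $V$, so that $\Psi_\tau=e^{-\tau}e^{\tau\Ad_V}$ is UCP and invertible.
\begin{itemize}[nosep]
\item For $1\le s\le t\le 2$ the propagator is $\Psi_{t-s}$, which is UCP on all of $\mathsf M_3$.
\item For $s<1\le t$ we use $E\Lambda^*_s=E$. Hence $W^*_{t,s}=\Psi_{t-1}E$ is a UCP propagator.
\end{itemize}

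\emph{Stage 3} ($t\ge2$). Freeze the dynamics, $\Lambda^*_t=\Lambda^*_2$, and take $W^*_{t,s}=\id$ for $2\le s\le t$. Compositions across the stage boundaries are handled in the same way, so the whole family is Heisenberg CP-divisible. It is continuous, and piecewise smooth, hence locally absolutely continuous.

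It remains to check the range at $s=2$. We have $\mathcal S_2=\Psi_1(\mathcal D)=\operatorname{span}\{B_1,B_2,B_3\}$ with $B_i=\Psi_1(|i\rangle\langle i|)$. Since $\Psi_1$ is invertible, $\mathcal S_2$ is three-dimensional. The remaining step, which I expect to be the main (if modest) obstacle, is to choose $V$ so that $\mathcal S_2$ is not a commutative $C^*$-subalgebra, for instance so that $[B_1,B_2]\neq0$. I would fix a concrete $V$, say a real rotation mixing $|1\rangle,|2\rangle,|3\rangle$, and compute $B_i=e^{-1}\sum_k\frac{1}{k!}V^k|i\rangle\langle i|V^{*k}$ explicitly, or argue generically from the first-order term in $\tau$, where $[B_1,B_2]\propto[V|1\rangle\langle1|V^*,|2\rangle\langle2|]+\dots\neq0$. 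If that fails, a more explicit choice is to let $\Psi_\tau$ be a mixture $(1-p_\tau)\id+p_\tau\Ad_V$ with $p_\tau$ increasing. The argument of the first paragraph, applied with $s=2$ and any $t>2$, then completes the proof.
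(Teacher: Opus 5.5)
Your three-stage construction is the paper's own: $(1-t)\id+t\mathcal E$ on $[0,1]$, then the invertible UCP semigroup $e^{\tau(\Ad_V-\id)}$ composed with $\mathcal E$, then a freeze. Your closing argument is also the paper's: $FF^-$ would be a UCP projection onto a three-dimensional range, contradicting Lemma~\ref{lem:three-dimensional-range}. Your remark that the first rank drop must land on a complemented range, by the limiting argument of Theorem~\ref{thm:qubit-Heisenberg-ranks}, nicely explains why the intermediate stage is needed.

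\textbf{The gap.} The step you defer is exhibiting a $V$ for which $\mathcal S_2$ is not a commutative $C^*$-subalgebra. It is the only place where something must actually be checked, and none of your three suggestions closes it as stated.
\begin{enumerate}[nosep]
\item ``Non-diagonal'' is not enough. A monomial unitary such as the cyclic shift maps $\mathcal D$ onto itself, and the cyclic shift is even a real rotation mixing $\ket{1},\ket{2},\ket{3}$. Then $\mathcal S_2=\mathcal D$ is commutative and the argument collapses.
\item The first-order expansion only gives $[B_1(\tau),B_2(\tau)]\neq0$ for small $\tau>0$, not at $\tau=1$. You would have to freeze at a sufficiently small $\tau_0$ instead, and still verify that the first-order term is nonzero for a concrete $V$.
\item The mixture $(1-p_\tau)\id+p_\tau\Ad_V$ is not Heisenberg CP-divisible for general $V$. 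For the cyclic shift, $\Ad_V^3=\id$ and $I,V,V^2$ are Hilbert--Schmidt orthogonal. With $a=1-p_s$, $b=p_s$, $a'=1-p_t$, $b'=p_t$ one finds
\[
\Psi_t\Psi_s^{-1}=\frac{(a'a^2+b'b^2)\,\id+a(b'-b)\Ad_V-b(b'-b)\Ad_{V^2}}{a^3+b^3}.
\]
Its negative weight on $\Ad_{V^2}$ gives a negative Choi eigenvalue whenever $p_t>p_s>0$, so Stage 2 would break. The mixture does work for an involution, $\Ad_V^2=\id$ with $p_\tau<\tfrac12$, where it is just a reparametrization of the semigroup.
\end{enumerate}

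\textbf{How the paper closes it.} It takes exactly such an involution: $U=I-\tfrac23J$, with $J$ the all-ones matrix. Then $U=U^*$, $U^2=I$, and $T_\tau=a_\tau\id+b_\tau\Ad_U$ in closed form. This gives $A_i=aE_i+b\,UE_iU$ with $a,b>0$. In $[A_1,A_2]$ the terms $a^2[E_1,E_2]$ and $b^2[UE_1U,UE_2U]$ vanish, the latter because the columns of $U$ are orthonormal. The cross term gives $\bra{1}[A_1,A_2]\ket{3}=\tfrac49ab\neq0$. With this choice of $V$ your proof is complete.
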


\begin{proof}
Let
\[
        E_i=|i\rangle\langle i|,
        \qquad i=1,2,3,
\]
and define the UCP map
\[
        \mathcal E(X)=\sum_{i=1}^3E_iXE_i .
\]
It is the conditional expectation onto the diagonal algebra {satisfying $\mathcal E^2=\mathcal E$}. For
\(0\leq q\leq1\), define
\[
        \Delta_q
        :=
        \mathcal E+q(\operatorname{id}-\mathcal E)
        =
        q\,\operatorname{id}+(1-q)\mathcal E.
\]
The maps \(\Delta_q\) are UCP and satisfy
\[
        \Delta_q\Delta_r=\Delta_{qr}.
\]
No, for \(0\leq t\leq1\), let
\[
        \Lambda^*_t=\Delta_{1-t}.
\]
Thus
\[
        \Lambda^*_0=\operatorname{id},
        \qquad
        \Lambda^*_1=\mathcal E.
\]

We now deform the diagonal range. Let \(J\) denote the \(3\times3\)
matrix whose entries are all equal to one, and define
\[
        U=I-\frac{2}{3}J.
\]
Since \(J^2=3J\), one has $U^*=U$ together with $U^2=I$. Consider the following semigroup
\[
        T_\tau
        :=
        \exp\!\left[\tau
        \bigl(\operatorname{Ad}_U-\operatorname{id}\bigr)\right],
        \qquad
        \operatorname{Ad}_U(X)=UXU^*.
\]
Because \(\operatorname{Ad}_U^2=\operatorname{id}\), this semigroup has
the explicit form
\[
        T_\tau
        =
        a_\tau\operatorname{id}
        +
        b_\tau\operatorname{Ad}_U,
\]
where
\[
        a_\tau=\frac{1+e^{-2\tau}}{2},
        \qquad
        b_\tau=\frac{1-e^{-2\tau}}{2}.
\]
In particular, every \(T_\tau\) is UCP and
\[
        T_\tau T_\sigma=T_{\tau+\sigma}.
\]

Define
\[
        \theta(t)=\min\{t-1,1\},
        \qquad t\geq1,
\]
and put
\[
        \Lambda^*_t=T_{\theta(t)}\mathcal E,
        \qquad t\geq1.
\]
Equivalently,
\[
        \Lambda^*_t=
        \begin{cases}
        \Delta_{1-t},
             & 0\leq t\leq1,\\[1mm]
        T_{t-1}\mathcal E,
             & 1\leq t\leq2,\\[1mm]
        T_1\mathcal E,
             & t\geq2.
        \end{cases}
\]
This family is continuous and locally absolutely continuous. Moreover, it is Heisenberg CP-divisible. Indeed, one may choose
\[
        W_{t,s}^*=
        \begin{cases}
        \displaystyle
        \Delta_{\frac{1-t}{1-s}},
             & 0\leq s\leq t<1,\\[3mm]
        T_{\theta(t)}\mathcal E,
             & 0\leq s<1\leq t,\\[1mm]
        T_{\theta(t)-\theta(s)},
             & 1\leq s\leq t.
        \end{cases}
\]
All these maps are UCP. Moreover,
\[
        W_{t,s}^*\Lambda^*_s=\Lambda^*_t.
\]
The second case $0\leq s<1\leq t$ follows from
\[
        \mathcal E\Delta_q=\mathcal E,
\]
whereas the third case $1\leq s\leq t$ follows from the semigroup law for \(T_\tau\).

For \(t\geq2\), the family is constant:
\[
        \Lambda^*_t=F:=T_1\mathcal E.
\]
Its range is
\[
        \mathcal S
        :=
        \operatorname{Im}F
        =
        \operatorname{span}\{A_1,A_2,A_3\},
\]
where
\[
        A_i=T_1(E_i)
        =aE_i+b\,UE_iU,
\]
with
\[
        a=\frac{1+e^{-2}}{2},
        \qquad
        b=\frac{1-e^{-2}}{2}.
\]
The map \(T_1\) is invertible: its eigenvalues  are \(1\) and \(e^{-2}\). Hence $\dim_{\mathbb C}\mathcal S=3$.

The operator system \(\mathcal S\) is not commutative{, since
\begin{equation}
    \bra1\bigl[A_1, A_2\bigr]\ket3 = \frac49{ab}\neq0.
\end{equation}
Therefore, by virtue of Lemma~\ref{lem:three-dimensional-range},} there is no UCP projection from \(\mathsf M_3\)
onto $\mathcal S=\operatorname{Im}F$.

Now choose any strict pair \(t>s\geq2\). Then
\[
        \Lambda^*_s=\Lambda^*_t=F.
\]
The identity map is a UCP propagator $F=\operatorname{id}  F$. 
Suppose, however, that there was a generalized inverse {\(F^-\)}
of \(F\) such  $W=FF^-$ is completely positive. Since
\[
        FF^-F=F,
\]
one would have
\[
        W^2
        =
        FF^-FF^-
        =
        (FF^-F)F^-
        =
        FF^-
        =
        W.
\]
Furthermore,
\[
        \operatorname{Im}W\subseteq\operatorname{Im}F=\mathcal S,
\]
while
\[
        W(F(X))=FF^-F(X)=F(X)
\]
shows that \(W\) fixes every element of \(\mathcal S\). Therefore $\operatorname{Im}W=\mathcal S$.
Since \(I\in\mathcal S\), one also has $W(I)=I$. Thus \(W\) would be a UCP projection onto \(\mathcal S\), which is impossible. Hence no generalized inverse of \(F\) produces a
completely positive propagator.
\end{proof}

\bibliographystyle{iopart-num}
\bibliography{biblio}

@article{CRS,
  title = {Divisibility and Information Flow Notions of Quantum Markovianity for Noninvertible Dynamical Maps},
  volume = {121},
  number = {8},
  pages = {080407},
  issn = {10797114},
  doi = {10.1103/PhysRevLett.121.080407},
  journal = {Physical Review Letters},
  author = {Chruściński, Dariusz and Rivas, Angel and St{\o}rmer, Erling},
  year = {2018}
}

@article{CC,
	title = {Construction of propagators for divisible dynamical maps},
	volume = {23},
	issn = {13672630},
	doi = {10.1088/1367-2630/abd43b},
	pages = {013009},
	journal = {New Journal of Physics},
	author = {Chakraborty, Ujan and Chruściński, Dariusz},
	year = {2021}
}

@book{GI_2,
	location = {New York},
	title = {Generalized Inverses},
	isbn = {0-387-00293-6},
	url = {http://link.springer.com/10.1007/b97366},
	doi = {10.1007/b97366},
	publisher = {Springer-Verlag},
	author = {Ben-Israel, Adi and Greville, Thomas N. E.},
	year = {2003},
	note = {Series Title: {CMS} Books in Mathematics},
}

@book{Paulsen,
author = {Paulsen, Vern},
doi = {10.1017/CBO9780511546631},
isbn = {9780521816694},
month = {feb},
publisher = {Cambridge University Press},
title = {{Completely Bounded Maps and Operator Algebras}},
url = {https://www.cambridge.org/core/product/identifier/9780511546631/type/book},
year = {2003}
}

@article{heinosaari_extending_2012,
	title = {Extending quantum operations},
	volume = {53},
	issn = {0022-2488, 1089-7658},
	url = {http://arxiv.org/abs/1205.0641},
	doi = {10.1063/1.4755845},
	number = {10},
	urldate = {2026-09-25},
	journal = {Journal of Mathematical Physics},
	author = {Heinosaari, Teiko and Jivulescu, Maria A. and Reeb, David and Wolf, Michael M.},
	month = oct,
	year = {2012},
	note = {arXiv:1205.0641 [math-ph]},
	pages = {102208}
}

@article{Filippov-ph-cov,
archivePrefix = {arXiv},
arxivId = {1911.09468},
author = {Filippov, S. N. and Glinov, A. N. and Lepp{\"{a}}j{\"{a}}rvi, L.},
doi = {10.1134/S1995080220040095},
eprint = {1911.09468},
issn = {1995-0802},
journal = {Lobachevskii Journal of Mathematics},
month = {apr},
number = {4},
pages = {617--630},
title = {{Phase Covariant Qubit Dynamics and Divisibility}},
volume = {41},
year = {2020}
}

@article{Teittinen2018,
author = {Teittinen, J and Lyyra, H and Sokolov, B and Maniscalco, S},
doi = {10.1088/1367-2630/aacc38},
issn = {1367-2630},
journal = {New Journal of Physics},
month = {jul},
number = {7},
pages = {073012},
title = {{Revealing memory effects in phase-covariant quantum master equations}},
volume = {20},
year = {2018}
}

@article{Smirne-ultimate,
	author = {Smirne, Andrea and Ko{\l}ody{\'{n}}ski, Jan and Huelga, Susana F. and Demkowicz-Dobrza{\'{n}}ski, Rafa{\l}},
	doi = {10.1103/PhysRevLett.116.120801},
	journal = {Phys. Rev. Lett.},
	number = {12},
	pages = {120801},
	title = {{Ultimate Precision Limits for Noisy Frequency Estimation}},
	volume = {116},
	year = {2016}
}

@book{Breuer-Petruccione,
	title = {The {Theory} of {Open} {Quantum} {Systems}},
	isbn = {0-19-921390-9},
	url = {https://academic.oup.com/book/27757},
	doi = {10.1093/acprof:oso/9780199213900.001.0001},
	publisher = {Oxford University PressOxford},
	author = {Breuer, Heinz-Peter and Petruccione, Francesco},
	month = jan,
	year = {2007},
}

@book{Rivas-Huelga-OQS,
	address = {Berlin, Heidelberg},
	title = {Open {Quantum} {Systems}},
	isbn = {978-3-642-23353-1},
	url = {https://link.springer.com/10.1007/978-3-642-23354-8},
	doi = {10.1007/978-3-642-23354-8},
	publisher = {Springer Berlin Heidelberg},
	author = {Rivas, Angel and Huelga, Susana F.},
	year = {2012},
	note = {Series Title: SpringerBriefs in Physics},
}

@book{Vacchini-OQS,
	address = {Cham},
	title = {Open {Quantum} {Systems}},
	isbn = {978-3-031-58217-2},
	url = {https://link.springer.com/10.1007/978-3-031-58218-9},
	doi = {10.1007/978-3-031-58218-9},
	publisher = {Springer Nature Switzerland},
	author = {Vacchini, Bassano},
	year = {2024},
	note = {Series Title: Graduate Texts in Physics},
}

@article{rivas_quantum_2014,
	title = {Quantum non-{Markovianity}: characterization, quantification and detection},
	volume = {77},
	issn = {0034-4885},
	url = {http://arxiv.org/abs/1405.0303},
	doi = {10.1088/0034-4885/77/9/094001},
	number = {9},
	journal = {Reports on Progress in Physics},
	author = {Rivas, Angel and Huelga, Susana F. and Plenio, Martin B.},
	month = sep,
	year = {2014},
	note = {arXiv: 1405.0303},
	pages = {094001},
}

@article{DeVega2017,
	title = {Dynamics of non-{Markovian} open quantum systems},
	volume = {89},
	issn = {15390756},
	doi = {10.1103/RevModPhys.89.015001},
	number = {1},
	journal = {Reviews of Modern Physics},
	author = {De Vega, Inés and Alonso, Daniel},
	year = {2017},
	note = {arXiv: 1511.06994},
	pages = {1--58},
}

@article{BLPV,
	title = {Colloquium: {Non}-{Markovian} dynamics in open quantum systems},
	volume = {88},
	issn = {15390756},
	doi = {10.1103/RevModPhys.88.021002},
	number = {2},
	journal = {Reviews of Modern Physics},
	publisher = {American Physical Society},
	author = {Breuer, Heinz Peter and Laine, Elsi Mari and Piilo, Jyrki and Vacchini, Bassano},
	month = apr,
	year = {2016},
}

@article{li_concepts_2018,
	title = {Concepts of quantum non-{Markovianity}: {A} hierarchy},
	volume = {759},
	issn = {03701573},
	doi = {10.1016/j.physrep.2018.07.001},
	journal = {Physics Reports},
	author = {Li, Li and Hall, Michael J.W. and Wiseman, Howard M.},
	year = {2018},
	note = {arXiv: 1712.08879},
	pages = {1--51},
}

@article{Chruscinski2022,
	title = {Dynamical maps beyond {Markovian} regime},
	volume = {992},
	issn = {03701573},
	url = {https://linkinghub.elsevier.com/retrieve/pii/S0370157322003428},
	doi = {10.1016/j.physrep.2022.09.003},
	journal = {Physics Reports},
	author = {Chruściński, Dariusz},
	month = dec,
	year = {2022},
	pages = {1--85},
}

@article{Settimo-SchroHeis,
	title = {Divisibility of {Dynamical} {Maps}: {Schrödinger} {Versus} {Heisenberg} {Picture}},
	volume = {7},
	issn = {2691-3399},
	url = {https://link.aps.org/doi/10.1103/6dt2-sq44},
	doi = {10.1103/6dt2-sq44},
	number = {1},
	journal = {PRX Quantum},
	author = {Settimo, Federico and Smirne, Andrea and Luoma, Kimmo and Vacchini, Bassano and Piilo, Jyrki and Chruściński, Dariusz},
	month = feb,
	year = {2026},
	note = {arXiv: 2506.08103},
	pages = {010340},
}

@article{andersson_finding_2007,
	title = {Finding the {Kraus} decomposition from a master equation and vice versa},
	volume = {54},
	issn = {0950-0340, 1362-3044},
	url = {http://arxiv.org/abs/0801.4100},
	doi = {10.1080/09500340701352581},
	number = {12},
	urldate = {2026-09-28},
	journal = {Journal of Modern Optics},
	author = {Andersson, Erika and Cresser, Jim D. and Hall, Michael J. W.},
	month = aug,
	year = {2007},
	note = {arXiv:0801.4100 [quant-ph]},
	pages = {1695--1716},
}

@article{cresser_master_2010,
	title = {Master equations with memory for systems driven by classical noise},
	volume = {283},
	copyright = {https://www.elsevier.com/tdm/userlicense/1.0/},
	issn = {00304018},
	url = {https://linkinghub.elsevier.com/retrieve/pii/S003040180901044X},
	doi = {10.1016/j.optcom.2009.10.052},
	number = {5},
	urldate = {2026-09-28},
	journal = {Optics Communications},
	author = {Cresser, J.D. and Facer, C.},
	month = mar,
	year = {2010},
	pages = {773--780},
}

@article{hou_singularity_2012,
	title = {Singularity of dynamical maps},
	volume = {86},
	issn = {1050-2947, 1094-1622},
	url = {http://arxiv.org/abs/1203.0053},
	doi = {10.1103/PhysRevA.86.012101},
	number = {1},
	urldate = {2026-09-28},
	journal = {Physical Review A},
	author = {Hou, S. C. and Yi, X. X. and Yu, S. X. and Oh, C. H.},
	month = jul,
	year = {2012},
	note = {arXiv:1203.0053 [quant-ph]},
	pages = {012101},
}

@article{fang_non-markovian_2018,
	title = {Non-{Markovian} dynamics of a qubit due to single-photon scattering in a waveguide},
	volume = {20},
	issn = {1367-2630},
	url = {https://iopscience.iop.org/article/10.1088/1367-2630/aaba5d},
	doi = {10.1088/1367-2630/aaba5d},
	number = {4},
	urldate = {2026-09-28},
	journal = {New Journal of Physics},
	author = {Fang, Yao-Lung L and Ciccarello, Francesco and Baranger, Harold U},
	month = apr,
	year = {2018},
	pages = {043035},
}

@article{hegde_open_2021,
	title = {Open quantum dynamics with singularities: {Master} equations and degree of non-{Markovianity}},
	volume = {104},
	issn = {2469-9926, 2469-9934},
	shorttitle = {Open quantum dynamics with singularities},
	url = {https://link.aps.org/doi/10.1103/PhysRevA.104.062403},
	doi = {10.1103/PhysRevA.104.062403},
	number = {6},
	urldate = {2026-09-28},
	journal = {Physical Review A},
	author = {Hegde, Abhaya S. and Athulya, K. P. and Pathak, Vijay and Piilo, Jyrki and Shaji, Anil},
	month = dec,
	year = {2021},
	pages = {062403},
}

@article{dabrowska_eternally_2021,
	title = {Eternally non-{Markovian} dynamics of a qubit interacting with a single-photon wavepacket},
	volume = {23},
	issn = {13672630},
	doi = {10.1088/1367-2630/ac3c60},
	number = {12},
	journal = {New Journal of Physics},
	author = {Dąbrowska, Anita and Chruściński, Dariusz and Chakraborty, Sagnik and Sarbicki, Gniewomir},
	year = {2021},
}

@article{siudzinska_non-markovianity_2022,
	title = {Non-{Markovianity} criteria for mixtures of noninvertible {Pauli} dynamical maps},
	volume = {55},
	issn = {1751-8113, 1751-8121},
	url = {https://iopscience.iop.org/article/10.1088/1751-8121/ac65c0},
	doi = {10.1088/1751-8121/ac65c0},
	number = {21},
	urldate = {2026-09-28},
	journal = {Journal of Physics A: Mathematical and Theoretical},
	author = {Siudzińska, Katarzyna},
	month = may,
	year = {2022},
	pages = {215201},
}

@article{jagadish_measure_2022,
	title = {Measure of invertible dynamical maps under convex combinations of noninvertible dynamical maps},
	volume = {106},
	issn = {2469-9926, 2469-9934},
	url = {https://link.aps.org/doi/10.1103/PhysRevA.106.012438},
	doi = {10.1103/PhysRevA.106.012438},
	number = {1},
	urldate = {2026-09-28},
	journal = {Physical Review A},
	author = {Jagadish, Vinayak and Srikanth, R. and Petruccione, Francesco},
	month = jul,
	year = {2022},
	pages = {012438},
}

@article{jagadish_noninvertibility_2023,
	title = {Noninvertibility and non-{Markovianity} of quantum dynamical maps},
	volume = {108},
	issn = {2469-9926, 2469-9934},
	url = {https://link.aps.org/doi/10.1103/PhysRevA.108.042202},
	doi = {10.1103/PhysRevA.108.042202},
	number = {4},
	urldate = {2026-09-28},
	journal = {Physical Review A},
	author = {Jagadish, Vinayak and Srikanth, R. and Petruccione, Francesco},
	month = oct,
	year = {2023},
	pages = {042202},
}

@misc{davidovic_nonperturbative_2026,
	title = {Nonperturbative {Resummation} of {Divergent} {Time}-{Local} {Generators}},
	url = {http://arxiv.org/abs/2603.25512},
	doi = {10.48550/arXiv.2603.25512},
	urldate = {2026-09-28},
	publisher = {arXiv},
	author = {Davidovic, Dragomir},
	month = jun,
	year = {2026},
	note = {arXiv:2603.25512 [quant-ph]},
}

@misc{cockett_generalized_2026,
	title = {Generalized {Inverses} of {Quantum} {Channels}: a categorical perspective},
	shorttitle = {Generalized {Inverses} of {Quantum} {Channels}},
	url = {http://arxiv.org/abs/2603.13946},
	doi = {10.48550/arXiv.2603.13946},
	urldate = {2026-09-28},
	publisher = {arXiv},
	author = {Cockett, Robin and Lemay, Jean-Simon Pacaud and Srinivasan, Priyaa Varshinee},
	month = mar,
	year = {2026},
	note = {arXiv:2603.13946 [quant-ph]},
}

@misc{blumenfeld_modifying_2025,
	title = {Modifying the {Time}-{Convolutionless} {Master} {Equation} via the {Moore}-{Penrose} {Pseudoinverse}},
	url = {http://arxiv.org/abs/2511.02556},
	doi = {10.48550/arXiv.2511.02556},
	urldate = {2026-09-28},
	publisher = {arXiv},
	author = {Blumenfeld, Caleb},
	month = nov,
	year = {2025},
	note = {arXiv:2511.02556 [quant-ph]},
}

@article{Arveson,
	title = {On subalgebras of \${C}{\textasciicircum}*\$-algebras},
	volume = {75},
	issn = {0002-9904},
	url = {https://www.ams.org/bull/1969-75-04/S0002-9904-1969-12293-7/},
	doi = {10.1090/S0002-9904-1969-12293-7},
	number = {4},
	urldate = {2026-09-28},
	journal = {Bulletin of the American Mathematical Society},
	author = {Arveson, William B.},
	month = jul,
	year = {1969},
	pages = {790--795},
}

@article{Tomiyama,
	title = {On the projection of norm one in \${W}{\textasciicircum} *\$-algebras},
	volume = {33},
	issn = {0386-2194},
	url = {https://projecteuclid.org/journals/proceedings-of-the-japan-academy-series-a-mathematical-sciences/volume-33/issue-10/On-the-projection-of-norm-one-in-W--algebras/10.3792/pja/1195524885.full},
	doi = {10.3792/pja/1195524885},
	number = {10},
	urldate = {2026-09-28},
	journal = {Proceedings of the Japan Academy, Series A, Mathematical Sciences},
	author = {Tomiyama, Jun},
	month = jan,
	year = {1957},
}

@article{ChoiEffros,
	title = {Injectivity and operator spaces},
	volume = {24},
	number = {2},
	journal = {Journal of functional analysis},
	author = {Choi, Man-Duen and Effros, Edward G.},
	year = {1977},
	pages = {156--209},
}

@misc{Stormer,
	title = {Extension of positive maps},
	url = {http://arxiv.org/abs/1805.05148},
	doi = {10.48550/arXiv.1805.05148},
	urldate = {2026-09-28},
	publisher = {arXiv},
	author = {Størmer, Erling},
	month = may,
	year = {2018},
	note = {arXiv:1805.05148 [math.OA]},
}

@article{chakraborty_information_2019,
	title = {Information flow versus divisibility for qubit evolution},
	volume = {99},
	issn = {2469-9926, 2469-9934},
	url = {https://link.aps.org/doi/10.1103/PhysRevA.99.042105},
	doi = {10.1103/PhysRevA.99.042105},
	number = {4},
	urldate = {2026-09-28},
	journal = {Physical Review A},
	author = {Chakraborty, Sagnik and Chruściński, Dariusz},
	month = apr,
	year = {2019},
	pages = {042105},
}

@article{braun_universal_2014,
	title = {A universal set of qubit quantum channels},
	volume = {47},
	copyright = {http://iopscience.iop.org/info/page/text-and-data-mining},
	issn = {1751-8113, 1751-8121},
	url = {https://iopscience.iop.org/article/10.1088/1751-8113/47/13/135302},
	doi = {10.1088/1751-8113/47/13/135302},
	number = {13},
	urldate = {2026-09-28},
	journal = {Journal of Physics A: Mathematical and Theoretical},
	author = {Braun, Daniel and Giraud, Olivier and Nechita, Ion and Pellegrini, Clément and Žnidarič, Marko},
	month = apr,
	year = {2014},
	pages = {135302},
}

@book{Stormer_book,
	address = {Berlin, Heidelberg},
	series = {Springer {Monographs} in {Mathematics}},
	title = {Positive {Linear} {Maps} of {Operator} {Algebras}},
	copyright = {https://www.springernature.com/gp/researchers/text-and-data-mining},
	isbn = {978-3-642-34368-1 978-3-642-34369-8},
	url = {https://link.springer.com/10.1007/978-3-642-34369-8},
	doi = {10.1007/978-3-642-34369-8},
	urldate = {2026-09-28},
	publisher = {Springer Berlin Heidelberg},
	author = {Størmer, Erling},
	year = {2013},
}

@book{Bhatia,
	title = {Positive {Definite} {Matrices}},
	isbn = {978-1-4008-2778-7},
	url = {https://www.degruyter.com/document/doi/10.1515/9781400827787/html},
	doi = {10.1515/9781400827787},
	publisher = {Princeton University Press},
	author = {Bhatia, Rajendra},
	month = dec,
	year = {2009},
}

@book{Horn,
	edition = {1},
	title = {Matrix {Analysis}},
	copyright = {https://www.cambridge.org/core/terms},
	isbn = {978-0-521-38632-6 978-0-521-30586-0 978-0-511-81081-7},
	url = {https://www.cambridge.org/core/product/identifier/9780511810817/type/book},
	doi = {10.1017/CBO9780511810817},
	urldate = {2026-09-28},
	publisher = {Cambridge University Press},
	author = {Horn, Roger A. and Johnson, Charles R.},
	month = dec,
	year = {1985},
}

@book{GI_1,
	address = {New York},
	series = {Wiley series in probability and mathematical statistics},
	title = {Generalized inverse of matrices and its applications},
	isbn = {978-0-471-70821-6},
	publisher = {Wiley},
	author = {Rao, C. Radhakrishna and Mitra, Sujit Kumar},
	year = {1971},
}

@book{GI_3,
	address = {New York, NY},
	title = {Projection {Matrices}, {Generalized} {Inverse} {Matrices}, and {Singular} {Value} {Decomposition}},
	isbn = {978-1-4419-9887-3},
	publisher = {Springer New York},
	author = {Yanai, Haruo and Takeuchi, Kei and Takane, Yoshio},
	year = {2011},
	note = {OCLC: 1120555938},
}

@article{chruscinski_measures_2011,
	title = {Measures of non-{Markovianity}: {Divisibility} versus backflow of information},
	volume = {83},
	copyright = {http://link.aps.org/licenses/aps-default-license},
	issn = {1050-2947, 1094-1622},
	shorttitle = {Measures of non-{Markovianity}},
	url = {https://link.aps.org/doi/10.1103/PhysRevA.83.052128},
	doi = {10.1103/PhysRevA.83.052128},
	number = {5},
	urldate = {2026-09-28},
	journal = {Physical Review A},
	author = {Chruściński, Dariusz and Kossakowski, Andrzej and Rivas, Angel},
	month = may,
	year = {2011},
	pages = {052128},
}

@article{Chiribella,
	title = {Counterexamples to the extendibility of positive unital norm-one maps},
	volume = {663},
	issn = {00243795},
	url = {https://linkinghub.elsevier.com/retrieve/pii/S0024379523000034},
	doi = {10.1016/j.laa.2023.01.003},
	urldate = {2026-09-28},
	journal = {Linear Algebra and its Applications},
	author = {Chiribella, Giulio and Davidson, Kenneth R. and Paulsen, Vern I. and Rahaman, Mizanur},
	month = apr,
	year = {2023},
	pages = {102--115},
}

@article{Amato,
	title = {Asymptotic {Dynamics} in the {Heisenberg} {Picture}: {Attractor} {Subspace} and {Choi}-{Effros} {Product}},
	volume = {32},
	issn = {1230-1612, 1793-7191},
	shorttitle = {Asymptotic {Dynamics} in the {Heisenberg} {Picture}},
	url = {https://www.worldscientific.com/doi/10.1142/S1230161225500143},
	doi = {10.1142/S1230161225500143},
	number = {03},
	urldate = {2026-09-28},
	journal = {Open Systems \& Information Dynamics},
	author = {Amato, Daniele and Facchi, Paolo and Konderak, Arturo},
	month = sep,
	year = {2025},
	pages = {2550014},
}

@article{Choi1974,
	title = {A schwarz inequality for positive linear maps on \${C}{\textasciicircum}\{{\textbackslash}ast\}\$-algebras},
	volume = {18},
	issn = {0019-2082},
	url = {https://projecteuclid.org/journals/illinois-journal-of-mathematics/volume-18/issue-4/A-schwarz-inequality-for-positive-linear-maps-on-Cast-algebras/10.1215/ijm/1256051007.full},
	doi = {10.1215/ijm/1256051007},
	number = {4},
	urldate = {2026-09-28},
	journal = {Illinois Journal of Mathematics},
	author = {Choi, Man-Duen},
	month = dec,
	year = {1974},
}

@article{MD1,
	title = {The multiplicative domain in quantum error correction},
	volume = {42},
	issn = {1751-8113, 1751-8121},
	url = {https://iopscience.iop.org/article/10.1088/1751-8113/42/24/245303},
	doi = {10.1088/1751-8113/42/24/245303},
	number = {24},
	urldate = {2026-09-28},
	journal = {Journal of Physics A: Mathematical and Theoretical},
	author = {Choi, Man-Duen and Johnston, Nathaniel and Kribs, David W},
	month = jun,
	year = {2009},
	pages = {245303},
}

@article{MD2,
	title = {Generalized {Multiplicative} {Domains} and {Quantum} {Error} {Correction}},
	volume = {139},
	issn = {0002-9939},
	url = {https://www.jstor.org/stable/41059318},
	number = {2},
	urldate = {2026-09-28},
	journal = {Proceedings of the American Mathematical Society},
	publisher = {American Mathematical Society},
	author = {Johnston, Nathaniel and Kribs, David W.},
	year = {2011},
	pages = {627--639},
}

@article{GKS,
  author  = {Gorini, V. and Kossakowski, A. and Sudarshan, E. C. G.},
  title   = {Completely positive dynamical semigroups of {N}-level systems},
  journal = {J. Math. Phys.},
  volume  = {17},
  pages   = {821--825},
  year    = {1976}
}

@article{L,
  author  = {Lindblad, G.},
  title   = {On the generators of quantum dynamical semigroups},
  journal = {Commun. Math. Phys.},
  volume  = {48},
  pages   = {119--130},
  year    = {1976}
}

@misc{Federico_Tasks,
	title = {Schrödinger and {Heisenberg} non-{Markovianity} in quantum information tasks},
	url = {http://arxiv.org/abs/2606.30751},
	author = {Settimo, Federico and Luoma, Kimmo and Piilo, Jyrki and Vacchini, Bassano and Chruściński, Dariusz and Smirne, Andrea},
	month = jun,
	year = {2026},
	note = {arXiv: 2606.30751},
}

@article{Pollock2018,
  author  = {F. A. Pollock and C. Rodr{\'i}guez-Rosario and
             T. Frauenheim and M. Paternostro and K. Modi},
  title   = {Operational Markov Condition for Quantum Processes},
  journal = {Physical Review Letters},
  volume  = {120},
  pages   = {040405},
  year    = {2018},
  doi     = {10.1103/PhysRevLett.120.040405}
}

\end{document}